\documentclass[12pt,a4paper]{article}
\usepackage{float}
\usepackage[english]{babel}
\usepackage{latexsym}
\usepackage {amssymb}
\usepackage {amsmath}
\usepackage {mathtools}
\usepackage{amsthm}
\usepackage{color}
\usepackage{hyperref}
\usepackage{setspace}
\usepackage{booktabs}
\usepackage{multirow}
\usepackage[titletoc,title]{appendix}

\usepackage{rotating}

\usepackage[longnamesfirst]{natbib}
\bibpunct{(}{)}{;}{a}{,}{,}
\renewcommand{\cite}{\citep}

\def\beqn{\begin{eqnarray*}}
\def\eeqn{\end{eqnarray*}}
\def\beq{\begin{eqnarray}}
\def\eeq{\end{eqnarray}}

\newtheorem{theorem}{Theorem}

\newtheorem{remark}{Remark}

\makeatother
\makeatletter
\renewcommand{\@fnsymbol}[1]{\@arabic{#1}}
\makeatother

\usepackage{graphicx}
\usepackage{mathrsfs}
\graphicspath{{Figures/}}

\title{Return-to-Baseline Testing \\via Empirically Calibrated e-processes}

\author{
Marta Regis
\footnote{
Department of Mathematics and Computer Science,
Eindhoven University of Technology,
PO Box 513, 5600 MB Eindhoven,
the Netherlands;
\href{mailto:m.regis@tue.nl}{m.regis@tue.nl}.}
\and
Paulo Serra
\footnote{
Department of Mathematics,
Vrije Universiteit Amsterdam,
De Boelelaan 1105, 1081 HV Amsterdam,
the Netherlands;
\href{mailto:p.j.de.andradeserra@vu.nl}{p.j.de.andradeserra@vu.nl}.
Corresponding author.}
}

\begin{document}

\baselineskip=25pt

\maketitle

\begin{abstract}
\baselineskip=15pt
\noindent
We consider the problem of detecting a Return to Baseline (RtB) in high-frequency monitoring data preceding and following an intervention, where the aim is to identify the time at which the data-generating distribution realigns with its pre-intervention distribution. We propose a sequential, distribution-free testing procedure that does not rely on specifying a null model and provides anytime-valid error control. The method relies on ideas from universal inference to define a discrepancy measure that is aggregated into a non-negative super-martingale, and is then empirically calibrated to form an e-process. The calibration is performed using the baseline data, and is thus subject-specific. We establish finite-sample bounds for the calibration error (under a flexible non-parametric assumption), discuss the impact of tuning parameters and computational complexity, and illustrate through simulations and a clinical case study that the procedure accurately detects RtB from monitoring data.\\

\noindent {\textit{Key words and phrases.}}
anytime-valid inference,
e-process,
high-frequency monitoring,
return-to-baseline,
sequential testing,
universal inference.
\end{abstract}

\newpage

\section{Introduction}
\label{section:introduction}

In many monitoring problems an intervention induces a temporary departure from a subject-specific baseline, after which the process gradually returns to its pre-intervention behaviour. 
At the (unknown) time at which this occurs, we say that a \emph{return to baseline} (RtB) has occurred. 
A common example comes from clinical monitoring, where a clinician administers an intervention whose effect is known to decay over time, while physiological variables are observed continuously or at high frequency. 
From a statistical perspective, the goal is to determine when the data-generating distribution has realigned with its pre-intervention distribution, based solely on the observed data.

Detecting an RtB is a challenging sequential problem. 
Monitoring may proceed continuously, but may also be paused or resumed, and decisions are made based on data observed up to data-dependent times. 
Moreover, the baseline behaviour is typically complex and subject-specific, and the effect of the intervention is often unknown beyond the fact that it perturbs the distribution of the process. 
These features distinguish the RtB problem from classical hypothesis testing like Wald’s sequential likelihood ratio test~\citep{wald1947sequential} which provides optimal procedures for testing simple hypotheses under fully specified parametric models, or cumulative sum (CUSUM) schemes~\citep{page1954continuous} where inference is framed in terms of detecting deviations from a null model, and from statistical process control, which is primarily concerned with identifying departures from a target state~\citep{shewhart1930economic,montgomery2020introduction}. 
Sequential two-sample testing~\citep{lheritier2015sequential, shekhar2023nonparametric, podkopaev2023sequential}
provides another related (but distinct) framework.
There, one observes data from two streams simultaneously and tests whether their
underlying distributions coincide, building e-values or martingales that grow whenever
the null of equality is violated.
The RtB problem differs in two fundamental respects: the baseline distribution $P$ is
unknown and must be estimated from a sample, making the null hypothesis
composite; and the inferential goal is not to detect a \emph{departure} from
distributional equality but to certify a \emph{return} to a previously observed
distribution.
In the RtB setting, baseline data are available and play a central role: rather than asking whether the process deviates from baseline, the task is to certify when it can again be regarded as consistent with the baseline distribution.

Despite its practical relevance, RtB is rarely treated as a statistical estimation or testing problem based on the observed measurement process. 
In clinical studies, RtB is frequently recorded as an observed or reported outcome -- often a binary indicator collected via questionnaires -- and used to compare treatments, patient groups, or time-to-event outcomes~\citep{dahl2009prospective,bergeron2022return,steuart2020discharge,jayadevappa2007ethnic,martin2007health,rossi2016erectile}. 
The RtB is thus not inferred from the data stream itself and one commonly relies on prespecified thresholds for what constitutes a meaningful change~\citep{jayadevappa2007ethnic}.

Current monitoring technologies increasingly produce dense and continuous data streams, and this creates opportunities for more precise, data-driven inference. 
For instance, in fetal and maternal monitoring studies investigating the effects of antenatal corticosteroid administration, where repeated or continuous measurements of heart rate variability and photoplethysmography signals are available~\citep{de2008differential,noben2019fetal,bester2023changes,bester2024changes}. 
Early analyses relied on repeated-measures ANOVA and pairwise contrasts, implicitly treating the intervention window as known. 
With richer data, it becomes natural to infer RtB directly from the observed signal, without specifying a parametric model for either baseline or intervention effects.

Any statistical procedure for detecting an RtB in such settings should satisfy several requirements. 
It should be calibrated to the individual baseline rather than to a population-level (parametric) model, and thus aim to be largely distribution-free. 
It should remain valid under continuous monitoring and optional stopping, so that sequential decision-making does not invalidate error guarantees. 
It should be robust to outliers and model misspecification, as monitoring data are often noisy and heterogeneous, and to interruptions in the data stream as these are also common.
Finally, it should avoid imposing restrictive assumptions on the intervention effect, which is typically unknown and likely to vary over time and patient.
In this paper we develop a sequential RtB testing procedure that meets exactly these requirements. 

Our approach is based on \emph{e-processes}~\citep{shafer2011test,grunwald2020safe,ramdas2020admissible,ramdas2024hypothesis}, which provide a natural framework for anytime-valid inference via non-negative super-martingales. 
While many existing e-process constructions rely on parametric likelihoods or explicitly specified null models, we take a different approach: the e-process is calibrated empirically using baseline data alone, without explicitly specifying a null sub-model.
Conceptually, our method draws on ideas from \emph{universal inference}~\citep{wasserman2020universal} and its recent implementations~\citep{dey2025generalized}, but differs from that literature by avoiding explicit model specification and by focusing on certifying a return to a previously observed distribution rather than testing a fixed null hypothesis.

The main contributions of this paper are the following.
We formulate the return-to-baseline problem as a sequential testing task for high-frequency monitoring data, explicitly accounting for continuous monitoring and optional stopping.
We propose a distribution-free RtB detection procedure based on an empirically calibrated e-process, where calibration relies solely on baseline data and is therefore subject-specific.
We establish finite-sample bounds on the calibration error under mild, flexible nonparametric assumptions, clarifying the role of tuning parameters and their effect on conservativeness and power.
We analyse the computational complexity of the proposed method and discuss practical trade-offs relevant for high-frequency data.
Finally, we demonstrate the performance of the procedure through simulations and a clinical case study, showing accurate RtB detection in a realistic monitoring scenario.

The remainder of the paper is organised as follows. Section~\ref{section:framework} introduces the formal RtB framework and the construction of the empirically calibrated e-process.
Section~\ref{section:numerics} illustrates the procedure, and Section~\ref{section:case_study} presents a case study.
We close the article with some conclusions in Section~\ref{section:conclusions}.
The proof of our main result in Appendix~\ref{section:proof}, and
further numerical results related to the computational performance of the algorithm are in Appendix~\ref{section:numerics:computational_cost}.

\section{The Return-to-Baseline framework}
\label{section:framework}

\subsection{The generic RtB problem}
\label{section:framework:rtb}

Before we specify a model for the data, the general RtB problem consists of the following.
We observe $n$ random vectors
\[
X_{1}, \dots, X_{n}\stackrel{i.i.d.}{\sim} P,
\]
where $P$ is the measure underlying the unknown joint distribution of the data under the baseline.
After these $n$ observations (with $n$ known), an intervention takes place that affects the distribution of the following $r$ measurements so that
$$
X_{n+i} \sim P_i, 
\quad
i=1,\dots, r.
$$
Here, the $P_i$'s are the (unknown, different from $P$, and potentially mutually different) measures underlying the joint distributions under the intervention.
After these $r$ observations, an RtB has occurred and data is again sampled from $P$ so that from that point onward we observe
\[
X_{n+r+1}, X_{n+r+2}, \dots \stackrel{i.i.d.}{\sim} P.
\]

The RtB problem consists of detecting $r$ when sequentially observing
\[
X_{1}, \dots, X_{n},X_{n+1}, \dots, X_{n+r},
X_{n+r+1}, \dots, X_{n+r+m}, \dots,
\]
where $n$ is known, while keeping $P$ as unspecified as possible.

\subsection{A statistical model for quasi-periodic data}
\label{section:framework:data_model}

Motivated by the case study that we discuss in Section~\ref{section:case_study}, we focus on the following model for the data.
Consider a stochastic process $\mathscr{X}=(\mathscr{X}(t))_{t\ge0}$ and, for any $\tau\in(0,1)$, its respective conditional quantiles $f^{(\tau)}$ defined as satisfying (by definition)
\begin{equation}
\label{equation:quantiles}
P\Big(\mathscr{X}(t)\le f^{(\tau)}(t) \Big) = \tau,
\quad t\ge0.
\end{equation}
The process $\mathscr{X}$ has some underlying periodicity in that for some $\rho>0$,
\begin{equation}
\label{equation:model}
f^{(\tau)}(t + 1/\rho) = 
f^{(\tau)}(t), 
\quad \tau\in(0,1), t\ge0,
\end{equation}
We refer to such a stochastic process (and data collected from it) as being \emph{quasi-periodic}.
We refer to its period as being $1/\rho$, or its frequency as being $\rho$.

Suppose that $t_{i,j} = (i-1)/\rho + t_j$, for $i = 1, \dots, n$, where $t_j = (j-1)/\rho$, for $j=1, \dots, p$,
so that, in particular, $f^{(\tau)}(t_{i,j}) = f^{(\tau)}(t_{j})$.
We assume that the data are collected from the stochastic process $\mathscr{X}$ at times $t_{i,j}$, and denote
\begin{equation}
\label{equation:observations}
X_{i,j} = 
\mathscr{X}(t_{i,j}), \quad
i = 1, \dots, n,\;
j = 1, \dots, p.
\end{equation}
We abbreviate the equally-spaced observations collected during the $i$-th period as $X_i=(X_{i,1}, \dots, X_{i,p})$.

Under~\eqref{equation:quantiles}-\eqref{equation:observations}, the respective data vectors $X_i$ are identically distributed, and these conditions specify our model for the data \emph{under the baseline}.
(See Remark~\ref{remark:independence} about the role of independence.)
Under the intervention, we still assume that observations are taken according to~\eqref{equation:observations} but that the intervention (temporarily) affects the distribution of $\mathscr{X}$.
In fact, the random vectors measured during the intervention are only assumed to have a different distribution from that of the ones taken during the baseline period, but are otherwise allowed to have a completely arbitrary distribution.
In particular, it is not assumed that the data remain quasi-periodic under the intervention.

\subsection{Example of data models}
\label{section:framework:examples}

We present a few increasingly more general examples of data models that fall into the framework in~Section~\ref{section:framework:data_model} with pre-prescribed quantile functions.


\noindent\textbf{Example 1}
Suppose that for periodic functions $f,\sigma$, with $\sigma>0$,
\[
X_{i,j} = f(t_{i,j}) + \sigma(t_{i,j})\, \epsilon_{i,j}, \quad
i = 1, \dots, n, \;
j = 1, \dots, p,
\]
where $\epsilon_{i,j}$ form a random sample from some distribution with median 0, variance 1, and quantile function $Q$.
In this case we have
\[
f^{(\tau)}(t) =
f(t) + \sigma(t) Q(\tau),
\]
where $f^{(1/2)}=f$ is the median regression function.
Setting $\sigma(\cdot)$ to a constant and $Q$ to the quantile function $\Phi^{-1}$ of a standard Gaussian gives an homoskedastic Gaussian regression model.

\noindent\textbf{Example 2}
One can also easily generate data with pre-specified quantile functions of any form.
Consider a family of functions 
$\{f^{(\tau)}, \tau\in(0,1)\}$ so that for each $\tau\in(0,1)$, $f^{(\tau)}$ is periodic, and so that for each $t\ge 0$, $\tau\mapsto f^{(\tau)}$ is strictly monotone increasing.
Consider a (marginally) ${\rm Unif}(0,1)$ distributed random vector 
$\big(U_{i,j}, 
i = 1, \dots, n, j = 1, \dots, p\big)$,
and define
\[
X_{i,j} = f^{(U_{i,j})}(t_{i,j}), \quad
i = 1, \dots, n, \;
j = 1, \dots, p.
\]
We have by construction, as required, that
\[
P\Big(X_{i,j}\le f^{(\tau)}(t_{i,j}) \Big) =
P\Big(f^{(U_{i,j})}(t_{i,j})\le f^{(\tau)}(t_{i,j}) \Big) =
P\Big(U_{i,j}\le \tau \Big) = \tau,
\]
where we use the monotonicity of $\tau\mapsto f^{(\tau)}$.


\noindent\textbf{Example 3}
Besides specifying quantiles (and therefore marginal distributions) of the process $\mathscr{X}$, one may also want to specify a correlation structure.
One can do this by considering a copula process; cf.~\cite{wilson2010copula}.
Consider a Gaussian Process $\mathscr{Z}$ with mean zero and some covariance kernel $\psi$ so that $\mathscr{U}(t)=\Phi\big(Z(t)/\psi(t,t)\big)$ is a process with uniform marginals.
One can then set $\mathscr{X}(t) = f^{(\mathscr{U}(t))}(t)$.
The covariance structure of $\mathscr{X}$ can be written down only implicitly (with the help of a Gaussian copula) so it may not be so transparent how a covariance structure for $\mathscr{Z}$ translates into a covariance structure for $\mathscr{X}$.
However, one can easily check that Spearman's $\rho$ (not to be confused with the frequency $\rho$ of the data process) and Kendall's $\tau$ are respectively,
\[
\rho(s,t) =
\frac{6}{\pi}\arcsin\left(\frac{\psi(s,t)}{2\psi(s,s)\psi(t,t)}\right), \quad
\tau(s,t) =
\frac{2}{\pi}\arcsin\left(\frac{\psi(s,t)}{\psi(s,s)\psi(t,t)}\right).
\]
So one can then control the strength of the auto-covariance in $\mathscr{X}$ by choice of the covariance kernel $\psi$.
(Again we refer the reader to Remark~\ref{remark:independence} about the role of independence.)

\vspace{1em}

As these examples illustrate, the model presented above, both as a working model and as a data generating model, is extremely flexible.
We now specify a model for the quantile functions $f^{(\tau)}$.

\subsection{A model for the quantile functions}
\label{section:framework:quantile_model}

To model the quantile functions $f^{(\tau)}$, we consider a linear model parametrised by coefficients
$\theta^{(\tau)}=(\theta^{(\tau)}_1, \dots, \theta^{(\tau)}_{d})$, $d\in\mathbb{N}$.
Specifically, we model $f^{(\tau)}$ as
\begin{equation}
\label{equation:function_model}
f_{\theta^{(\tau)}}(t) = 
\sum_{i=1}^{d+m}\theta^{(\tau)}_i B_i^m(\rho\, t),
\quad 0\le t\le 1/\rho,
\end{equation}
where 
$\{B_i^m, i=1,\dots,d+m\}$ 
forms a B-spline basis of order $m\in\mathbb{N}$ on $[0,1]$, relative to the knots
$0=l_0 \le l_1 \le \cdots \le l_{d+1}=1$, and where
$\theta^{(\tau)}_{d+i}=\theta^{(\tau)}_i$, for $i=1,\dots,m$.
Our motivation for using this model for the quantile functions is that one can easily check that 
$f_{\theta^{(\tau_1)}} \le f_{\theta^{(\tau_2)}}$ (meaning $f_{\theta^{(\tau_1)}}(t) \le f_{\theta^{(\tau_2)}}(t)$ for all $t\ge0$) if, and only if,
$\theta^{(\tau_1)} \le \theta^{(\tau_2)}$ (meaning if $\theta^{(\tau_1)}_i \le \theta^{(\tau_2)}_i$ for all $i=1,\dots,k$), c.f.,~Section 4.6 of~\cite{schumaker2007spline}.
This makes it simple to check (or enforce) the required monotonicity of the quantile functions.

Let $\hat f^{(\tau,-k)}$ be an estimate of the quantile function of level $\tau$ from the baseline data $X_1,\dots, X_{k-1}$, $X_{k+1}, \dots, X_n$, i.e., when observation $X_k$ is held out ($k=1,\dots,n$).
It will become clear later why we may omit baseline periods.
Under model \eqref{equation:function_model}, 
$\hat f^{(\tau,-k)}$ is of the form
$f_{\hat\theta^{(\tau,-k)}}$ where $\hat\theta^{(\tau,-k)}$
is obtained as minimiser of
\[
\theta \mapsto
\sum_{\substack{i=1\\ i\neq k}}^{n}
\sum_{j=1}^p
h_\tau\big(
X_{i,j} - f_{\theta}(t_j)
\big),
\]
where 
$h_\tau(x)=(\tau-1\{x<0\})x$ is the check function~\citep{koenker2001quantile}.

The minimisation of this criterion can be reformulated as a linear program and solved efficiently: quantiles of different levels can be estimated concurrently by minimising the sum of the respective criteria, and crossings of the respective estimated quantiles can be avoided by adding extra linear constraints to the linear program~\citep{bondell2010noncrossing}.

\subsection{A metric for detecting deviations from the baseline}
\label{section:framework:deviation_metric}

To detect deviations from the baseline, we estimate (several) conditional quantiles $f^{(\tau)}$ from the baseline observations ($X_1,\dots, X_n$).
We then use these estimates as a template to judge how well subsequent periods ($X_{n+1},\dots$) align with the baseline.

For each $x\in\mathbb{R}^p$, each set 
$S\subseteq\{1,\dots,p\}$, each index 
$k\in\{1,\dots,n\}$, and each quantile level
$\tau\in(0,1)$,
we define the count function
\[
N^{(\tau,-k)}(x, S) =
\big|\{j\in S: 
x_j > \hat f^{(\tau,-k)}(t_j)\}\big|.
\]

If $\hat f^{(\tau,-k)}$ were the true quantile function then, under the baseline model,
$N^{(\tau,-k)}(X_i, S)$ should be approximately $(1-\tau)|S|$ for each $X_i$ from the baseline;
Theorem 2.2 of~\cite{koenker2005quantile} makes this statement precise.
As such it makes sense to use the statistics
\begin{equation}
\label{equation:discrepancy}
T^{(\tau,-k)}(X_i,S) = \big|N^{(\tau,-k)}(X_i, S)-(1-\tau)|S|\big|,
\end{equation}
to measure discrepancy between each $X_i$ and the baseline (at quantile level $\tau$).

Typical realizations of these  discrepancies satisfy the following:
\begin{enumerate}
\def\theenumi{\alph{enumi}}
\def\labelenumi{(\theenumi)}
\item 
$T^{(\tau,-k)}(X_i,S)$, $i=1,\dots,n$, $i\neq k$, is small by design, for any $S$, $\tau$, $k$;

\item
$T^{(\tau,-k)}(X_k,S)$
is representative of the discrepancy for an independently observed period coming from the baseline, for any $k$;

\item
$T^{(\tau,-k)}(X_{n+i},S)$, $i=1,\dots,r$ is, for each $k$, large for at least some combination of $\tau$ and $S$ if $P_i$ is different from $P$;

\item
$T^{(\tau,-k)}(X_{n+r+i},S)$, $i=1,\dots$ is distributed like $T^{(\tau,-k)}(X_k,S)$.
\end{enumerate}

We are now ready to define an e-process based on the statistics in~\eqref{equation:discrepancy}.
This e-process will track discrepancy from the baseline.
Large values of this process indicate that we are under the intervention, while small values certify a return to baseline.

\subsection{An e-process to detect the return to baseline}
\label{section:framework:E_process}

Pick $s\in\{1,\dots,p\}$, consider\footnote{
The parameter $s$ can be set freely. 
It affects the precision and computational burden of the algorithm in a way that we make precise in Theorem~\ref{theorem:bound} and Section~\ref{section:numerics:computational_cost}, respectively.
} 
$\mathcal{S}_s=\{S\subseteq\{1,\dots,p\}:|S|=s\}$, and let $Z\sim {\rm Unif}(\mathcal{S}_s)$ be a subset with cardinality $s$ of $\{1,\dots,p\}$, chosen uniformly at random.

Consider also the $\sigma$-algebra $\mathcal{F}_0 =
\sigma(X_1,\dots,X_{n})$ generated by the data collected under the baseline.
For any collection $\mathcal{T}=\{\tau_1,\dots,\tau_q\}$, 
$q\in\mathbb{N}$, and
$k\in\{1,\dots,n\}$,
\begin{equation}
\label{eq:empirical_distribution_max_discrepancy}
\max_{\tau\in\mathcal{T}}
T^{(\tau,-k)}(X_k,Z)\mid \mathcal{F}_0 \sim
{\rm Unif}\big(\{
\max_{\tau\in\mathcal{T}}T^{(\tau,-k)}(X_k,S): 
S \in \mathcal{S}_s\}\big).
\end{equation}
This means that we can empirically derive the distribution of the maximal discrepancy over quantile levels $\mathcal{T}$ for baseline period $k$,  for a randomly chosen set of indices $Z$ with cardinality $s$.
Since our estimate of the conditional quantiles does not depend on $X_k$, this distribution is representative of the distribution for the maximum discrepancy for a new sample from the baseline.
This is the key to calibrate the super-martingale that we define in the following.

\begin{remark}
\label{remark:independence}
One is likely to encounter auto-correlations in $\mathscr{X}$ so the reliance on independent observations across different periods in the generic RtB problem in Section~\ref{section:framework:rtb} may sound restrictive.
However, one can select the sets in $\mathcal{S}_s$ as subsets of cardinality $s$ of $\{r_1,\dots,r_2\}$ for $1<r_1<r_2<p$, $r_2-r_1> s \ge 1$, rather than as subsets of $\{1,\dots,p\}$.
In doing so, we create a buffer between the time intervals corresponding to the observation vectors, reducing possible correlation between them.
\end{remark}

Define, for an $\mathcal{F}_0$-measurable random variable $D_\gamma>1$ (a.s.) to be specified in Section~\ref{section:framework:E_process:norming}, 
the stochastic processes $\mathscr{M}^{(-k)}=(\mathscr{M}_t^{(-k)})_{t\in\mathbb{N}}$, $k=1,\dots,n$, with
\begin{equation}
\label{eq:dicrepancy_process}
\begin{aligned}
\mathscr{M}_1^{(-k)} &= 
\frac1{D_\gamma}
\max_{\tau\in\mathcal{T}}T^{(\tau,-k)}(X_{n+1},Z),\\
\mathscr{M}_t^{(-k)} &=
\min\left(\mathscr{M}_{t-1}^{(-k)}, \;
\frac1{D_\gamma}{\max_{\tau\in\mathcal{T}}T^{(\tau,-k)}(X_{n+t},Z)}\right),
\quad t = 2,\dots,
\end{aligned}
\end{equation}
so that for each $k$, $\mathscr{M}_t^{(-k)}$ is a running minimum of similar terms.

Consider additionally the filtration $\mathcal{F} =
(\mathcal{F}_t)_{t\in\mathbb{N}_0}$ 
\[
\mathcal{F}_t =
\sigma(X_1, \dots, X_{n+t}, Z),
\quad t\in\mathbb{N},
\]
where we remind that $Z\sim {\rm Unif}(\mathcal{S}_s)$, independent of the data.
Note that, by definition, each process $\mathscr{M}^{(-k)}$ is non-negative, adapted to the filtration $\mathcal{F}$, and non-increasing
so that
\[
E[\mathscr{M}_t^{(-k)}\mid\mathcal{F}_{t-1}] \le
E[\mathscr{M}_{t-1}^{(-k)}\mid\mathcal{F}_{t-1}] = \mathscr{M}_{t-1}^{(-k)}.
\]
Since the maximum in each $\mathscr{M}_1^{(-k)}$ is bounded by $s$, we conclude that the processes $\mathscr{M}^{(-k)}$ are non-negative super-martingale with respect to $\mathcal{F}$, and then e-processes by appropriate choice of $D_\gamma$.
Note that the same is then also automatically true for the process
$\mathscr{M}=(M_t)_{t\in\mathbb{N}}$,
where
\[
\mathscr{M}_t = 
\frac{1}{n} \sum_{k=1}^n \mathscr{M}_t^{(-k)}.
\]

\begin{remark}
In our context, considering a running minimum rather than a running average- or product makes the most sense.
The super-martingale scales with discrepancy from the baseline and should remain large as long as we are under the intervention.
The moment that data coming from the null gets fed to the discrepancy metric $T$, this results in small discrepancies that causes the super-martingale to drop.
This allows the RtB to be detected without lag -- a running average or product would likely take several time units to drop sufficiently to indicate an RtB. 
\end{remark}

\subsubsection{Empirical calibration of the e-process}
\label{section:framework:E_process:calibration}

The random variable $D_\gamma$ in~\eqref{eq:dicrepancy_process} is chosen so as to guarantee that the expectation of the non-negative super-martingales $\mathscr{M}^{(-k)}$ is at most $1$.
Since the processes $\mathscr{M}^{(-k)}$, $k=1,\dots,n$, have non-increasing trajectories, and since the respective $\mathscr{M}_1^{(-k)}$, are identically distributed, if we pick $D_\gamma$ so that for some $k$ (and so then all $k$)
\begin{equation}
\label{eq:E_process_bound}
E\left[\frac1{D_\gamma}
\max_{\tau\in\mathcal{T}}
T^{(\tau,-k)}(X,Z) \mid\mathcal{F}_0\right] \le 1,
\quad (a.s.),
\end{equation}
where $X\sim P$ (i.e., if $X$ is sampled from the baseline distribution) is independent of the $\sigma$-algebra $\mathcal{F}_0$ and of $Z$,
then the processes $\mathscr{M}^{(-k)}$ (and then $\mathscr{M}$) become super-martingales and then also e-processes.
We abbreviate
\begin{equation}
\label{eq:discrepancies_max}
F^{(-k)}(x,S) = 
\max_{\tau\in\mathcal{T}}
T^{(\tau,-k)}(x,S), 
\quad
x \in \mathbb{R}^p, 
S\in\mathcal{S}_s.
\end{equation}
Note that this function is non-negative and (upper-)bounded by $s$.
This means that~\eqref{eq:E_process_bound} holds if we ensure that $D_\gamma$ is measurable with respect to $\mathcal{F}_0$ and satisfies
\[
E\left[F^{(-k)}(X,Z) \mid\mathcal{F}_0\right] \le D_\gamma,
\quad (a.s.),
\]
where $X$ (independent of $Z$ and $\mathcal{F}_0$) represents {\emph{new}} data from the baseline.

\begin{remark}
The traditional way of calibrating the e-process would be to specify a null model $\mathcal{P}_0$ and set
\[
D_\gamma \ge
\sup_{P\in\mathcal{P}_0}
E_{X\sim P}\left[
F^{(-k)}(X,Z)
\right],
\]
thus not making use of the information in $\mathcal{F}_0$.
Each scaled variable then becomes an e-value.
While this is also available in our setting, it would be at the expense of having to formulate a specific $\mathcal{P}_0$, and be able to compute or bound the supremum~\citep{shafer2011test,vovk2021values}.
We do not pursue this here and instead calibrate the e-process \emph{empirically} and doing away with the need to specify a null model explicitly.
\end{remark}

It would then be natural to set $D_\gamma$ to the conditional expectation
\[
E\left[F^{(-k)}(X,Z)\mid\mathcal{F}_0\right] =
\frac{1}{{p\choose s}}\sum_{S\in\mathcal{S}_s}
\int F^{(-k)}(x,S)\, dP(x),
\] that can be computed for any $k=1,\dots,n$, but it depends on the unknown distribution $P$.
However, since $X$ is independent of $Z$ and of $\mathcal{F}_0$, 
the conditional expectation above can also be written as
\[
E\left[\int F^{(-k)}(x,Z)\,dP(x) \mid\mathcal{F}_0\right],
\]
and since $Z\sim {\rm Unif}(\mathcal{S}_s)$ we have that the above is the expectation of
\begin{equation}
\int F^{(-k)}(x,Z)\,dP(x) \mid\mathcal{F}_0 \sim
{\rm Unif}
\Big(\big\{
\int F^{(-k)}(x,S)\,dP(x): S \in \mathcal{S}_s 
\big\}\Big).
\label{eq:uniformdistribution}
\end{equation}

Next, note that for each $S \in \mathcal{S}_s$, the following average of identically distributed, bounded random variables,
\[
F(S) =
\frac1n \sum_{k=1}^n F^{(-k)}(X_k,S),
\]
is an unbiased estimator for each integral $\int F^{(-k)}(x,S)\,dP(x)$.
We then propose to consider the conditional distribution
$ {\rm Unif}
\Big(\big\{
F(S): S \in \mathcal{S}_s 
\big\}\Big),
$
as a proxy for the conditional distribution in~\eqref{eq:uniformdistribution}.
We set
\[
D_\gamma =
Q^{(\gamma)}[F(Z)\mid \mathcal{F}_0] \vee 1,
\]
for some $0<\gamma<1$, where $Q^{(\gamma)}$ represents the quantile of level $\gamma$ of the distribution of the quantity inside the square brackets.
Note that the $D_\gamma$ are based on sample quantiles of the collection $\big\{F(S): S \in \mathcal{S}_s \big\}$ which can be computed explicitly from the baseline data $X_1,\dots,X_n$.

\subsubsection{Calibrating the norming constant $\gamma$}
\label{section:framework:E_process:norming}
Select any $\delta>0$ 
(eventually depending on $n$ and/or $p$) 
and compute the distribution of $F(Z)\mid\mathcal{F}_0$ from the data.
Then, identify all $\gamma$ such that
\[
Q^{(\gamma)}[F(Z)\mid \mathcal{F}_0] > 
\delta + 
E[F(Z)\mid \mathcal{F}_0],
\]
and set,
\begin{equation}
\hat\gamma =
\hat\gamma_\delta =
\inf_{\gamma >0} \left\{
\gamma:
Q^{(\gamma)}[F(Z)\mid \mathcal{F}_0] > 
\delta + 
E[F(Z)\mid \mathcal{F}_0]
\right\}.
\label{eq:set_gammacalibration}
\end{equation}

We aim to overestimate the conditional expectation of the discrepancy $F^{(-k)}(X,Z)$ by slightly overshooting an estimator thereof.
While one is free to pick $\delta>0$, if $\delta$ is too large, then the set in~\eqref{eq:set_gammacalibration} may be empty.
In such case, even the order statistic $Q^{(1)}[F(Z)\mid \mathcal{F}_0] = \max\big\{F(S): S \in \mathcal{S}_s \big\}$ does not ensure a ``gap'' $\delta$ with the conditional expectation.
On the other hand, setting $\delta$ to be less than the smallest difference between any two different elements in
$\big\{F(S): S \in \mathcal{S}_s \big\}$
is always a valid choice
as the set in~\eqref{eq:set_gammacalibration} is then only empty if the conditional distribution $F(Z)\mid \mathcal{F}_0$ is degenerate.

\subsubsection{Accuracy of the empirical calibration}
\label{section:framework:E_process:theorem}
Note that for any $x,S$, we have $0\le F^{(-k)}(x,S)\le s$ (a.s.), and that by construction $D_{\hat\gamma}\ge1$.
Consider now for $X\sim P$ independent of $\mathcal{F}_0$, the event
\[
A = \big\{
D_{\hat\gamma} < 
E[F^{(-k)}(X,Z)\mid\mathcal{F}_0]
\big\}.
\]
Since $\hat\gamma$ ensures
$
D_{\hat\gamma} > \delta + E[F(Z)\mid \mathcal{F}_0],
$
the event $A$ implies the event
\[
B=\{E[F^{(-k)}(X,Z)\mid\mathcal{F}_0]-E[F(Z)\mid \mathcal{F}_0]>\delta\}.
\]
This means that we can cover the entire sample space with $A^c \cup B$
so that
\[
E\left[
\frac{F^{(-k)}(X,Z)}{D_{\hat\gamma}}\mid\mathcal{F}_0\right] \le 
1 + s\,1_B,
\]
and so, for any $\delta>0$ and any $k=1,\dots,n$, taking expectation,
\[
E\left[
\frac{F^{(-k)}(X,Z)}{D_{\hat\gamma}}\right]
\le
1 + 
s\,P\Big(
E[F^{(-k)}(X,Z)\mid\mathcal{F}_0]-
E[F(Z)\mid \mathcal{F}_0]>\delta
\Big).
\]

This automatically implies, in particular, that the process $\mathscr{M}/(1+s)$ is an e-process, albeit a rather \emph{conservative} one.
Indeed, since $F(Z)$ is an unbiased estimator for each $F^{(-k)}(X,Z)$, we expect that the upper bound is of order
\[
1 + s\, o(1) =
1 + o(1),
\]
for fixed $s$, as $n,p$ go to $\infty$.
This implies that $\mathscr{M}$ itself is (asymptotically) a well calibrated e-process and the (conservative) scaling by $1+s$ is (asymptotically) unnecessary.

The following theorem makes the convergence precise and makes explicit the tradeoffs coming from the choice of $\delta$ and other user parameters such as $s$.

\begin{theorem}
\label{theorem:bound}
Consider observations conforming with the~\eqref{equation:quantiles}-\eqref{equation:observations} and let $P$ represent the distribution of the data under the baseline.
Let $n,p\in\mathbb{N}$ represent 
the number of periods, and 
the number of observations per period,
respectively.

Select a finite number of quantile levels $\mathcal{T}\in(0,1)^{|\mathcal{T}|}$ and suppose that~\footnote{
This is the class of all functions with $\beta$ uniformly bounded derivatives.
The precise definition can be found at the end of Chapter 2.1 of ~\cite{schumaker2007spline}.
} $f^{(\tau)}\in\mathcal{H}_\beta(L)$, $L>0$, for each $\tau\in\mathcal{T}$.
Consider the stochastic process $\mathscr{M}$ defined in Section~\ref{section:framework:E_process} and let $d,m\in\mathbb{N}$ be, respectively, the number of knots and the order of the splines~\eqref{equation:function_model} used to model the conditional quantiles.
Finally, we assume that the joint distribution of the data vectors is absolutely continuous, and that the respective densities are bounded away from $0$ and $\infty$ at the quantiles of interest.

Then, if for some $\delta>0$, the process $\mathscr{M}$ as calibrated as in Section~\ref{section:framework:E_process:norming},
\[
E_P\mathscr{M}_t
\le 
1 + s \min(1,\, \omega), \quad t\in\mathbb{N},
\]
where, 
if $m\ge\beta$, 
for two universal constants $K_1, K_2>0$, 
depending only on $\beta$, and $L$, we have
\[
\omega = 
K_1\,
\left(1+\frac{3s}{\delta}\right) |\mathcal{T}|
\frac{d^{1+2\beta}}{np} 
+
K_2\, 
\frac{s^2}{\delta\, d^{\beta}}
+
2 \exp\left(-\frac{2n\delta^2}{9s^2}\right).
\]
\end{theorem}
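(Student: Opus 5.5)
We start from the reduction already made in Section~\ref{section:framework:E_process:theorem}. Since each $\mathscr{M}^{(-k)}$ is non-increasing, $E_P\mathscr{M}_t\le E_P\mathscr{M}_1$. The terms $\mathscr{M}_1^{(-k)}$ are bounded by $s/D_{\hat\gamma}\le s$ and satisfy
\[
E_P\mathscr{M}_1\le 1+s\,P(B_k),\qquad B_k=\Big\{E[F^{(-k)}(X,Z)\mid\mathcal{F}_0]-E[F(Z)\mid\mathcal{F}_0]>\delta\Big\}.
\]
The bound $\min(1,\omega)$ follows once we show $P(B_k)\le\omega$, since trivially $P(B_k)\le 1$. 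For fixed $S$, write $G_k(S)=\int F^{(-k)}(x,S)\,dP(x)$. The difference inside $B_k$ is then the uniform average over $S\in\mathcal{S}_s$ of
\[
G_k(S)-F(S)=\Big[G_k(S)-\tfrac1n\sum_{l}G_l(S)\Big]+\tfrac1n\sum_l\Big[G_l(S)-F^{(-l)}(X_l,S)\Big].
\]
We split $\delta$ into three pieces of size $\delta/3$ and bound each contribution.

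\emph{Concentration term.} We replace each leave-one-out estimate $\hat f^{(\tau,-l)}$ by the true quantile $f^{(\tau)}$. Call the resulting oracle discrepancies $\bar F(x,S)$. These do not depend on $l$, and the variables $\bar F(X_l,S)$ are i.i.d., bounded in $[0,s]$, with mean $\bar G(S)$. After averaging over $S$, Hoeffding's inequality for the average of the $n$ i.i.d. variables $\binom{p}{s}^{-1}\sum_S\bar F(X_l,S)$ gives probability at most $2\exp(-2n(\delta/3)^2/s^2)$. This is the third term of $\omega$.

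\emph{Estimation terms.} It remains to control, with probability loss, quantities of the form $|F^{(-l)}(x,S)-\bar F(x,S)|$ and $|G_l(S)-\bar G(S)|$. The count $N^{(\tau,-l)}(x,S)$ differs from its oracle version only at indices $j\in S$ where $x_j$ lies between $\hat f^{(\tau,-l)}(t_j)$ and $f^{(\tau)}(t_j)$. Hence
\[
|F^{(-l)}(x,S)-\bar F(x,S)|\le\sum_{\tau\in\mathcal{T}}\sum_{j\in S}1\{x_j\text{ between }\hat f^{(\tau,-l)}(t_j),f^{(\tau)}(t_j)\}.
\]
The densities are bounded, so for $X\sim P$ independent of the fit, the expectation of this is at most $C\sum_\tau\sum_{j\in S}|\hat f^{(\tau,-l)}(t_j)-f^{(\tau)}(t_j)|$. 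We average over $S$ and apply Markov's inequality with threshold $\delta/3$, which produces the factor $1/\delta$. The $L^1$ error then splits into a bias part and a variance part:
\begin{itemize}
\item The bias part is bounded by spline approximation of $\mathcal{H}_\beta(L)$ with $m\ge\beta$, giving $\|f_{\theta^*}-f^{(\tau)}\|_\infty\lesssim d^{-\beta}$ (Schumaker). Combined with at most $s$ indices per set, this gives the term $K_2 s^2/(\delta d^\beta)$; the extra factor $s$ comes from the $s$-bounded range when converting the expectation bound.
\item The variance part uses the convergence rate of B-spline quantile regression with $np$ effective observations and $d$ basis functions, for example via Koenker's Theorem 2.2 and Bahadur-type bounds with densities bounded away from $0$. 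This gives a mean-squared error bound that we carry as $d^{1+2\beta}/(np)$ per quantile level. Summing over $\tau$ gives the factor $|\mathcal{T}|$, and the prefactor $(1+3s/\delta)$ comes from adding the direct expectation and the Markov step.
\end{itemize}

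\emph{Main obstacle.} The hardest step is the variance bound for the leave-one-out spline quantile estimator. It requires uniformity over $l$ and a careful handling of the in-sample term $F^{(-l)}(X_l,S)$, where $X_l$ is independent of $\hat f^{(\tau,-l)}$ precisely because of the leave-one-out construction; this independence is what allows the same density argument to apply. The first thing to try is a non-asymptotic version of the quantile-regression rate: convexity of the check loss, a local quadratic lower bound using the density floor, and a Bernstein-type bound on the score over the $d$-dimensional spline space. The resulting $d^{2\beta}$ inflation comes from crude eigenvalue bounds on the B-spline Gram matrix relative to the knot mesh.

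A union bound over the three events then gives $P(B_k)\le\omega$.
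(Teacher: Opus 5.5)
Your skeleton matches the paper's. You reduce to $E_P\mathscr{M}_t\le 1+s\,P(B_k)$ and use a $\delta/3$ split whose pieces, after oracle replacement, are exactly the paper's $H_1$ (fresh $X$), $H_2$ (held-out $X_l$, via leave-one-out independence; exchangeability makes uniformity over $l$ unnecessary) and $H_3$ (Hoeffding, same constant). You depart in the estimation terms, and there your derivation of $\omega$ does not hold up.

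The paper introduces $G^{(-k)}_\Delta=\{\max_\tau\sup_t|\hat f^{(\tau,-k)}(t)-f^{(\tau)}(t)|\le\Delta\}$. It bounds the discrepancy by $s\,1\{c^{(-k)}=0\}$ with $P(c^{(-k)}=0)=O(s\Delta)$, which gives $P(|H_1|>\delta/3)\le P(G^{(-k),c})+O(s^2\Delta/\delta)$ and $P(|H_2|>\delta/3)\le\frac{3s}{\delta}P(G^{(-k),c})+O(s^2\Delta/\delta)$. It then bounds $P(G^{(-k),c})$ by Markov on $\sum_i(\hat\theta_i-\theta_i)^2$ at threshold $\Delta^2/2-\phi(d)$, with $\Delta^2=3\phi(d)\asymp d^{-2\beta}$. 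That is where each feature of $\omega$ comes from: $d^{1+2\beta}/(np)$ is $d/(np)$ divided by $\phi(d)$; the weight $1+3s/\delta$ is the weight of $P(G^{(-k),c})$ in $H_1$ plus that in $H_2$; and the $s^2$ in the bias term comes from the same bounds.

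Your band-counting route has none of these mechanisms. The estimation error enters linearly, through
$\frac{s}{p}\sum_j E|\hat f^{(\tau,-l)}(t_j)-f^{(\tau)}(t_j)|\lesssim s\bigl(d^{-\beta}+\sqrt{d/(np)}\bigr)$, which is a root-mean-square quantity. So ``carrying the MSE as $d^{1+2\beta}/(np)$'' and inserting it linearly is a mismatch; done consistently it would give $\sqrt{d^{1+2\beta}/(np)}$, which $\omega$ does not dominate. Nor can Gram-matrix eigenvalue bounds supply $d^{2\beta}$: they produce powers of $d$ tied to the knot mesh, never the smoothness index $\beta$. Your explanations of the prefactor $1+3s/\delta$ and of the ``extra factor $s$'' likewise borrow mechanisms that exist only in the paper's $G$-event argument.

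The repair within your route is one line. By AM--GM, $\sqrt{d/(np)}\le\frac12\bigl(d^{1+2\beta}/(np)+d^{-2\beta}\bigr)$; together with $d^{-2\beta}\le d^{-\beta}$ and $s\le s^2$, this places your bound $\frac{Cs|\mathcal{T}|}{\delta}\bigl(d^{-\beta}+\sqrt{d/(np)}\bigr)+2e^{-2n\delta^2/(9s^2)}$ inside $\omega$.

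So fixed, your route is a genuine alternative and in fact sharper than the paper's:
\begin{itemize}
\item it needs only averaged error at the design points rather than sup-norm control;
\item it is linear rather than quadratic in $s$ in the bias term;
\item optimising $d$ gives the rate $(np)^{-\beta/(1+2\beta)}$ instead of $(np)^{-\beta/(1+3\beta)}$.
\end{itemize}

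Both versions share the same loose ends. There is a hidden factor $|\mathcal{T}|$ in the bias term (the paper's $c^{(-k)}$ must also range over $\tau$). Density bounds are needed on a neighbourhood of the quantiles, not just at them. And a moment bound is used where only an $O_P$ rate for regression quantiles is cited.
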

\begin{proof}
The proof of this theorem is in Section~\ref{section:proof}.
\end{proof}

The immediate corollary of this theorem is that 
irrespectively of the baseline $P$,
as long as the assumptions of the theorem hold,
\[
P\Big(
\mathscr{M}_t > 1/\alpha
\Big) 
\le
P\Big(
\mathscr{M}_1 > 1/\alpha
\Big) 
\le 
\alpha \Big(1 + s\, \min\big(1,\omega\big)\Big) =
\alpha \Big(1 + s\, \min\big(1,o_P(1)\big)\Big),
\]
where the inflation factor in the upper bound is the price to pay for not having to specify a model for $P$.

The process $\mathscr{M}$ is typically initialized with data coming from the alternative; 
as such, it sits above the threshold and Ville's inequality ensures that it is unlikely for the process to be under the null.
In fact, given its structure, as long as a deviation is detected along any quantile under consideration, the process should take large values.
Reciprocally, once we return to baseline, the value of the e-process is determined by the latest e-value which will then be small.
Once $\mathscr{M}$ drops below the threshold, we have accumulated quite some evidence to support an RtB as no significant deviation is detected along any of the considered quantiles.

\subsubsection{Discussion of the bound}
\label{section:framework:E_process:discussion}

The bound provided by Theorem~\ref{theorem:bound} depends on quite a few quantities so it is worth it to discuss the impact of each of these quantities on the bound.
The upper bound can be used to assess by how much the reference upper bound of 1 is potentially being overshot.
If that is deemed too excessive, then one can further scale the process (eventually by $1+s$) for sharper control.
Appendix~\ref{section:numerics:computational_cost} contains numerical experiments to illustrate the overshoot.
These suggest that the bound is conservative.

It might seem odd that the upper bound is monotone decreasing in $\delta$.
This would suggest taking rather large $\delta$ to minimise the bound but note that the norming constant increases with $\delta$ so that one is then more conservative.
The parameter $d$ controls the flexibility of the splines resulting in a trade-off between two of the terms.
Taking $d$ of order $(np)^{1/(1+3\beta)}$ balances the two terms so that they become of order $(np)^{-\beta/(1+3\beta)}$.
The requirement $m\ge\beta$ cannot be checked in practice since the smoothness of the conditional quantiles is not known.
If $m<\beta$, then the bound is still valid with $\beta$ replaced by $m$.
The last term depends on $n$ but not $p$ which might seem odd given that we have $np$ data points and one might expect all asymptotics to be driven by $np$.
This term comes from an application of the law of large numbers and controls the convergence of an average of $n$ data vectors of a bounded function supported in $\mathbb{R}^p$ to its expectation.
Hence the dependence on $n$ but not $p$.


Only one of the terms in the bound is affected by the number of quantiles under consideration, which we think of as being fixed.
Considering more quantiles does increase the computational cost (see Appendix~\ref{section:numerics:computational_cost}) but enables a finer inspection of the signal which may be necessary if the intervention is expected to have only a subtle effect on the signal.
The range of the discrepancy measure based on which $\mathscr{M}$ is built is determined by $s$ so it is natural for the upper bound to be monotone increasing in $s$.
Large $s$ also has a computational cost that is discussed in Appendix~\ref{section:numerics:computational_cost}.

\section{Numerical results}
\label{section:numerics}


We now illustrate our procedure on synthetic data.
Figure~\ref{fig:Numerics_data_example_00} illustrates a time series sampled under the baseline for one period of 24h, and then under the intervention for another period of 24h, and the underlying conditional data distribution is illustrated in Figure~\ref{fig:Numerics_quantiles_example_00}.
The line in Figure~\ref{fig:Numerics_quantiles_example_00} corresponding $\tau=0.5$, $f^{(0.5)}$ is just a sine wave in $[0,24]$, and the same sine wave with another higher frequency added to it in $[24,48]$.
The remaining $f^{(\tau)}$ are just $f^{(0.5)} + \Phi^{-1}(\tau)$, where $\Phi^{-1}$ is the quantile function of the standard Gaussian distribution.

\begin{figure}[!h]
\centering
\includegraphics[width=0.99\linewidth]{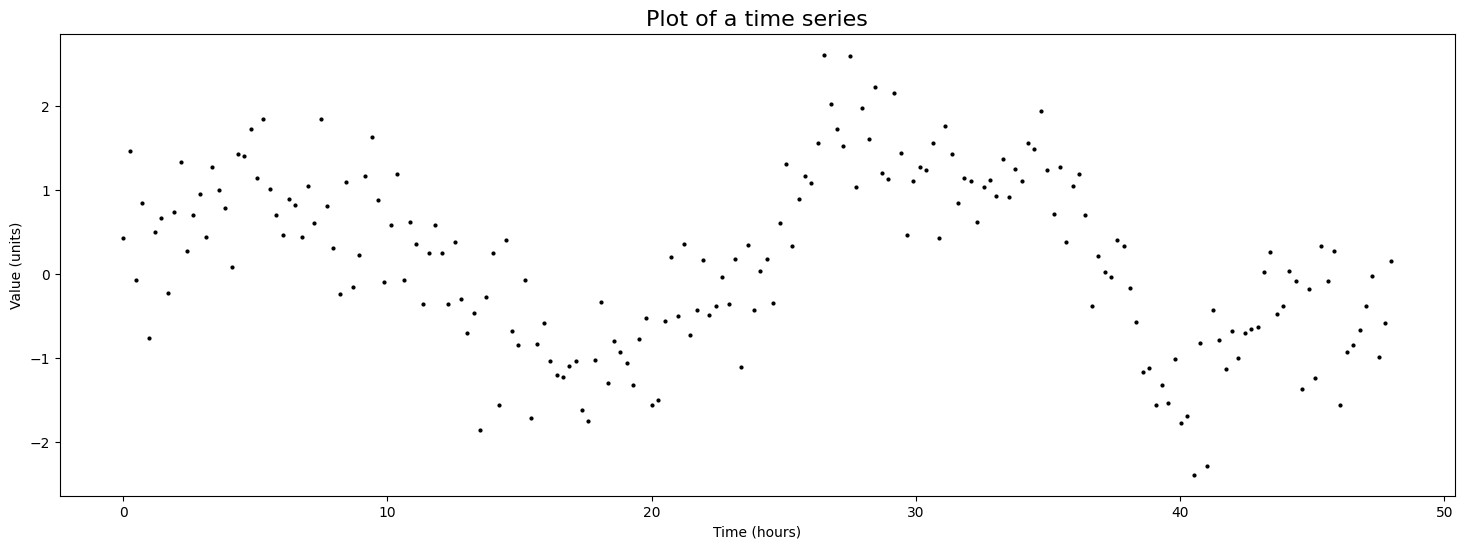}
\caption{
An example of data generated from the model.
There are 200 observations, the first 100 of which were collected under the baseline, followed by 100 under the intervention.
}
\label{fig:Numerics_data_example_00}
\end{figure}


\begin{figure}[!h]
\centering
\includegraphics[width=0.99\linewidth]{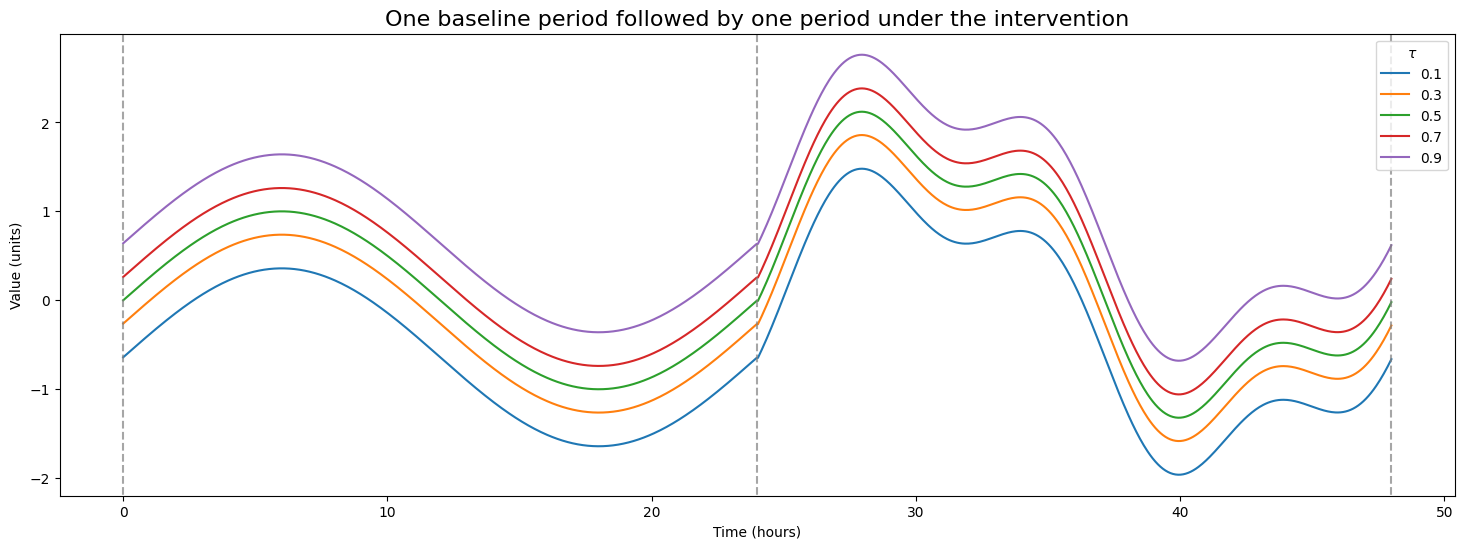}
\caption{
A few conditional quantiles of the baseline and intervention.
Under the intervention, the amplitude of the signal somewhat increases and there are fluctuations at a higher frequency.}
\label{fig:Numerics_quantiles_example_00}
\end{figure}


In Table~\ref{tab:results} we report the results of applying our procedure for different combinations of 
number of baselines ($n$), 
number of observations per period ($p$),
size for the sub-samples ($s$),
and degrees of freedom for the cubic B-spline fits ($\rm{df}$).
In all cases we took 
$\mathcal{T}=\{0.1, 0.2, \cdots, 0.9\}$,
$5$ periods under the intervention ($r=5$), followed by
$5$ baseline periods after the end of the intervention ($m=5$).
For each possible difference between the estimated RtB and the real RtB, we report in a Monte Carlo run of 1000 simulations, the proportion of simulations that attained that difference.
We took $\alpha=0.05$.

\begin{table}[!h]
\centering
\begin{tabular}{c|cccc|lllllll}
\multicolumn{1}{c}{}           & \multicolumn{4}{c}{Parameters} & \multicolumn{7}{c}{Differences}           \\
\multicolumn{1}{c|}{Experiment} & $n$  & $p$   & $s$ & $\rm{df}$ & -5   & -4  & -3  & -2  & -1  & 0    & 1   \\ \hline
1                              & 4    & 4094  & 64  & 16        & 
2.1    & 1.5 & 1.8 & 1.4 & 2.5 & 90.1   & 0.6 \\
2                              & 16   & 1024  & 64  & 16        & 
2.4  & 1.7 & 2.2 & 1.5 & 1.8 & 90.2 & 0.2 \\
3                              & 16   & 256   & 64  & 16        & 
0.5 & 0.3 & 0.7 & 0.9 & 1.2 & 95.6 & 0.8 \\
4                              & 16   & 256   & 8   & 16        & 96.7 & 3.1 & 0.1 & 0.1   & 0   & 0    & 0   \\
5                              & 16   & 258   & 64  & 8         & 
2.4    & 1.4 & 1.7 & 1.5 & 1.6 & 90.8 & 0.6 \\ \hline
\end{tabular}
\caption{Percentage of simulations that each difference between detected RtB and real RtB was obtained.
Since $r=m=5$, this difference can take values in $\{-5,\dots,5\}$; differences not reported in the table were never observed.
In all experiments the threshold of $1/0.05$ is used.}
\label{tab:results}
\end{table}

Experiments 1--3 and 5 produce comparable results, illustrating that $n$ and $p$
can be traded off (note however that they are not fully interchangeable due to the
exponential term in the bound), and that the precision of the calibration does not
hinge on a particular choice of $\mathrm{df}$: since the fitting step precedes the
calibration step, any choice of $\mathrm{df}$ is valid and simply results in different
test statistics.
Experiment 4 is the exception: the small value of $s$ means we may select time
points at which the two signals differ little, causing a loss of power and early
stopping.
A more principled remedy is to pick $s$ closer to $p/2$.\footnote{
There is of course a computational cost associated with large $\binom{p}{s}$; in our
implementation we cap the number of subsets considered at $2^{20}$.}

It is also noteworthy that the procedure is very unlikely to stop too late and miss the RtB:
across the total 5000 experiments, only in 22 cases (0.44\%) did the procedure stop too late and only then only one time unit late.

Since from~\eqref{eq:empirical_distribution_max_discrepancy} the null distribution of the maximal discrepancy is available,
one can directly compute p-values
\[
W_i = \frac{1}{|S_s|}\sum_{S\in S_s} \mathbf{1}\!\left\{F(S) > \tfrac{1}{n}\sum_k F^{(-k)}(X_{n+i},Z)\right\}.
\]
We consider two natural alternatives.
The first converts $W_i$ into an e-value via Shafer's calibrator $f(w)=\kappa w^{\kappa-1}$
($\kappa=0.5$; cf.~\citealp{shafer2011test}) and builds an e-process via a running minimum (Table~\ref{tab:results_comp_e_values}).
The second applies sequential Bonferroni: declare RtB the first time
$\max(W_1,\ldots,W_i)>\alpha/10$, equivalent to $\min(1/W_i)<10/\alpha$ (Table~\ref{tab:results_comp_p_values}).
The same data realisations were used across all three procedures.

\begin{table}[!h]
\centering
\begin{tabular}{c|lllllll}
\multicolumn{1}{c}{}           & \multicolumn{7}{c}{Differences}           \\
\multicolumn{1}{c|}{Experiment} & 
-5   & -4  & -3  & -2  & -1  & 0    & 1   \\ \hline
1                              &  
6.6    & 7.4 & 4.8 & 3.1 & 3.4 & 74.7   & 0.0 \\
2                              & 
7.7  & 5.8 & 5.1 & 4.9 & 3.1 & 73.4 & 0.0 \\
3                              &  
3.0 & 1.6 & 2.1 & 1.8 & 1.9 & 89.6 & 0.0 \\
4                              &  
57.8 & 18.1 & 7.9 & 4.2   & 3.0   & 9.0    & 0.0   \\
5                              &  
6.5    & 5.6 & 4.6 & 3.4 & 3.1 & 76.8 & 0.0 \\ \hline
\end{tabular}
\caption{
Analogue to Table~\ref{tab:results} reporting the results using e-values obtained from calibrated p-values. 
We omit the parameters of each experiment as these are the same as in Table~\ref{tab:results}.
}
\label{tab:results_comp_e_values}
\end{table}

\begin{table}[!h]
\centering
\begin{tabular}{c|llllllll}
\multicolumn{1}{c}{}           & \multicolumn{8}{c}{Differences}           \\
\multicolumn{1}{c|}{Experiment} & 
-5   & -4  & -3  & -2  & -1  & 0    & 1 & 2   \\ \hline
1                              &  
2.9    & 2.1 & 1.9 & 0.8 & 1.0 & 91.0   & 0.3 & 0.0 \\
2                              & 
2.3  & 2.1 & 1.8 & 2.0 & 1.6 & 90.0 & 0.2 & 0.0 \\
3                              &  
0.9 & 0.6 & 0.8 & 0.4 & 0.4 & 96.2 & 0.7 & 0.0 \\
4                              &  
48.8 & 17.5 & 9.6 & 6.3   & 3.5   & 12.7    & 1.5 & 0.1   \\
5                              &  
1.7    & 2.2 & 2.6 & 1.9 & 1.5 & 90.1 & 0.0 & 0.0 \\ \hline
\end{tabular}
\caption{
Analogue to Tables~\ref{tab:results} and~\ref{tab:results_comp_e_values} reporting the results for using p-values sequentially with a Bonferroni correction.
We omit the parameters of each experiment as these are the same as in Table~\ref{tab:results}.
}
\label{tab:results_comp_p_values}
\end{table}

The calibrated e-value approach loses power due to the calibrator (cf.~\citealp{vovk2021values,ramdas2024hypothesis}), as our results confirm.
The Bonferroni approach requires a pre-specified horizon, and as a process its expectation is not bounded by 1 under the null.
Our approach matches Bonferroni while providing anytime-valid guarantees, and outperforms the calibrated e-value approach.
In either of the two approaches the upside should be reiterated, though:
no empirical calibration is necessary to compute the p-values and these can be computed directly without any modeling assumptions.
Nonetheless, our approach matches the Bonferroni approach (which requires a pre-specified horizon) while providing anytime-valid guarantees, and outperforms the calibrated e-value approach (confirming that direct empirical calibration avoids the power loss that calibrator functions introduce).

\section{Real data application}
\label{section:case_study}

In the field of maternal and obstetric care, multiple attempts have been made in identifying a return to baseline from data, when the signal being monitored is the maternal or fetal heart rate (MHR and FHR, respectively), and the intervention is the administration of corticosteroids~\citep{de2008differential, noben2019fetal, bester2023changes, bester2024changes}.

Antenatal corticosteroids are administered to accelerate fetal maturation in
anticipation of preterm birth.
The timing of delivery relative to the RtB is highly relevant: birth before the RtB may
undermine the therapeutic benefit, while delaying unnecessarily exposes mother and
fetus to additional risk.
Prior analyses of this effect (De Heus et al., 2008; Noben et al., 2019; Bester et al.,
2023, 2024) relied on non-parametric tests (Kruskal-Wallis, Friedman) with
post-hoc corrections (Dunn's test, Bonferroni) applied to short summarising
windows of around five minutes, yielding population-level median effects rather
than patient-specific inference.
 
\subsection{The NIEM-O data}
We consider the data from one patient, originally collected for the NIEM-O clinical study, 
in which women with high-risk pregnancies were unobtrusively monitored for several days via eCTG, from which maternal and fetal heart rate can be derived. For additional details, we refer to the study protocol~\citep{de2025continuous}.
Some important details about the study protocol include the primary goal of the study, which was \emph{not} the investigation of the return to baseline, and the fact that the eCTG measurements were collected under free living conditions (i.e.,  women could move freely), implying a number of artefacts such as movement artefacts and signal interruptions (e.g.~taking off the measuring device for walking or taking a shower). 
For the considered patient there are no measurements available before the administration of the corticosteroids (start of the intervention).
The data are recorded at a high-frequency (4 Hz), and are not preprocessed prior to analysis.

After administration (at 18:00 PM) the patient was continuously monitored for 6 days, but the free living conditions lead to interruptions in monitoring.
To mitigate artifacts, and deal with the incomplete nature of the monitoring, we only considered data collected between 23:30 PM and 2:30 AM. This window was present for the 6 days.
We consider the last night of measurements available (\emph{day 6}, the furthest away from the intervention) as baseline.
This follows analogous studies; cf.~\citep{bester2023changes, bester2024changes, noben2019fetal}, and is motivated by the pharmacokinetics of
the specific corticosteroids~\citep{bester2024changes}, and the duration of the drug’s effect
on fetal HR and HRV~\citep{verdurmen2013influence, verdurmen2018influence,  noben2019fetal}.
For the main analysis, we opted for a conservative approach and considered solely the last (and thus furthest) night available as baseline. 
In a secondary stability analysis, we considered the last two days (\emph{day 5} and \emph{day 6}) as baseline.
 
\subsection{Application of the method to real data}
Our approach requires at least two baseline periods (since we hold out one period for comparison, and use remaining data for fitting).
In the main analysis, only the data from \emph{day 6} was taken as baseline (one baseline, thus), and we opted to divide this baseline segment of 150 minutes into either 3, 5, 6, or 10 contiguous sub-segments (of 50, 30, 25, or 15 minutes, respectively.)
While the baseline segment that we selected was fairly stable, one should interpret the fitted quantiles as estimates of the quantiles of the average distribution (over sub-segments); cf.~\cite{Einmahl:2026aa}.
As such, our procedure detects after how many sub-segments (periods of either 50, 30, 25, or 15 minutes) the signal has returned to the baseline:
specifically, we check if, for each sub-segment, that segment is compatible with data distributed according to an average sub-segment from the baseline.

The total number of observations in one period (150 minutes at a sampling frequency of 4 Hz) was 36000.
The RtB framework then involves the choice of a number of parameters.
In our implementation we took 
$\mathcal{T}=\{0.1, 0.2, \dots, 0.9\}$ (quantile levels under consideration).
We settled on $s=100$ and $\mathrm{df}=d+m=30$ (degrees of freedom for the spline fits): lower values of $s$ produced unstable
results, while larger values increased computational cost without appreciable gain;
lower values of $\mathrm{df}$ failed to capture the complexity of the signal, and
no appreciable difference was seen for $\mathrm{df}$ up to $100$.


\subsection{Results}
\subsubsection{Main analysis  ($6^{th}$ night considered as baseline)}
The RtB method can signal a return to baseline of the fetal heart rate (FHR) already after the first night after corticosteroids administration. 
We can see from Figure~\ref{fig:results_multi1blocks} that the significance line (at $20=1/\alpha$) is crossed at the 8th, 6th, 7th, or 10th sub-segments when dividing into 3, 5, 6, or 10 sub-segments, respectively (cf. Table~\ref{tab:summaryresults}.)

\begin{table}[!h]
\centering
\begin{tabular}{cccc}
Sub-segments & 
Duration (min) & 
RtB (sub-segment)& 
RtB (hours)\\
\hline
\hline
3   & 50 & 8 & 59h 50m\\    
5   & 30 & 6 & 35h\\        
6   & 25 & 7 & 35h\\        
10  & 15 & 10 & 13h\\       
\hline
\end{tabular}
\caption{
Summary of results. 
RtB (sub-segment) refers to the number of the sub-segment at which a return to baseline is detected. RtB (hours) refers to the number of hours after first administration at which RtB was detected.
Example: RtB of 8(=3+3+2) with 3 sub-segments indicates that the 2nd segment of the 3rd night an RtB had already taken place.
The times in the final column indicate that that much time after the start of the intervention an RtB had been detected (intervention was 5h 30 min prior to start of monitoring.)
}
\label{tab:summaryresults}
\end{table}

At first glance, the results may not look entirely consistent but this is not the case.
Note that a return to baseline with 50 minute long sub-segments indicates that a whole sub-segment of 50 minutes was deemed statistically indistinguishable (across 9 different conditional quantiles corresponding to 100 randomly picked time points) from what an average sub-segment of 50 minutes under the baseline looks like.
Clearly, this is more stringent than a return to baseline being detected when inspecting the signal with shorter sub-segments.
Note also that an RtB detected in the first segment of a given night means that by that time an RtB had taken place (which does not exclude the possibility that the RtB occurred prior to that segment while no monitoring was taking place). 
Repeating the analysis over sub-segments of different lengths gives the RtB analysis a multi-scale undertone.


\begin{figure}[!ht]
    \centering
    \includegraphics[width=0.80\linewidth]{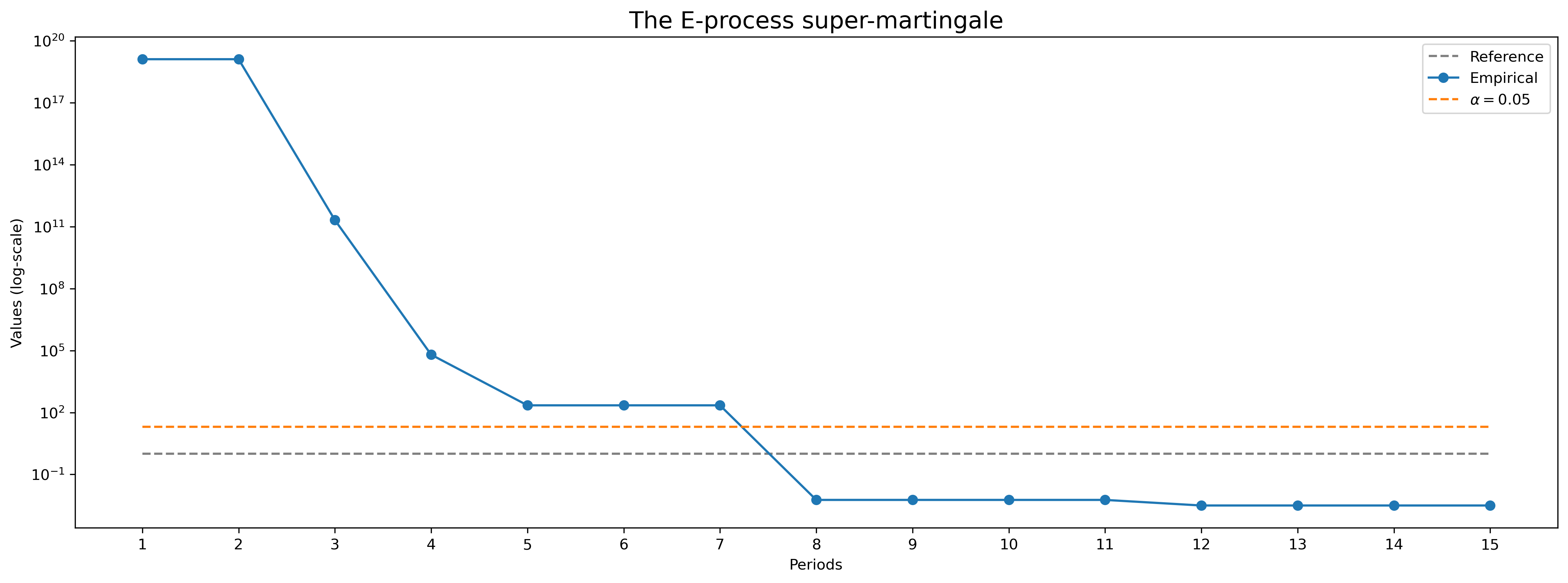}
    \includegraphics[width=0.80\linewidth]{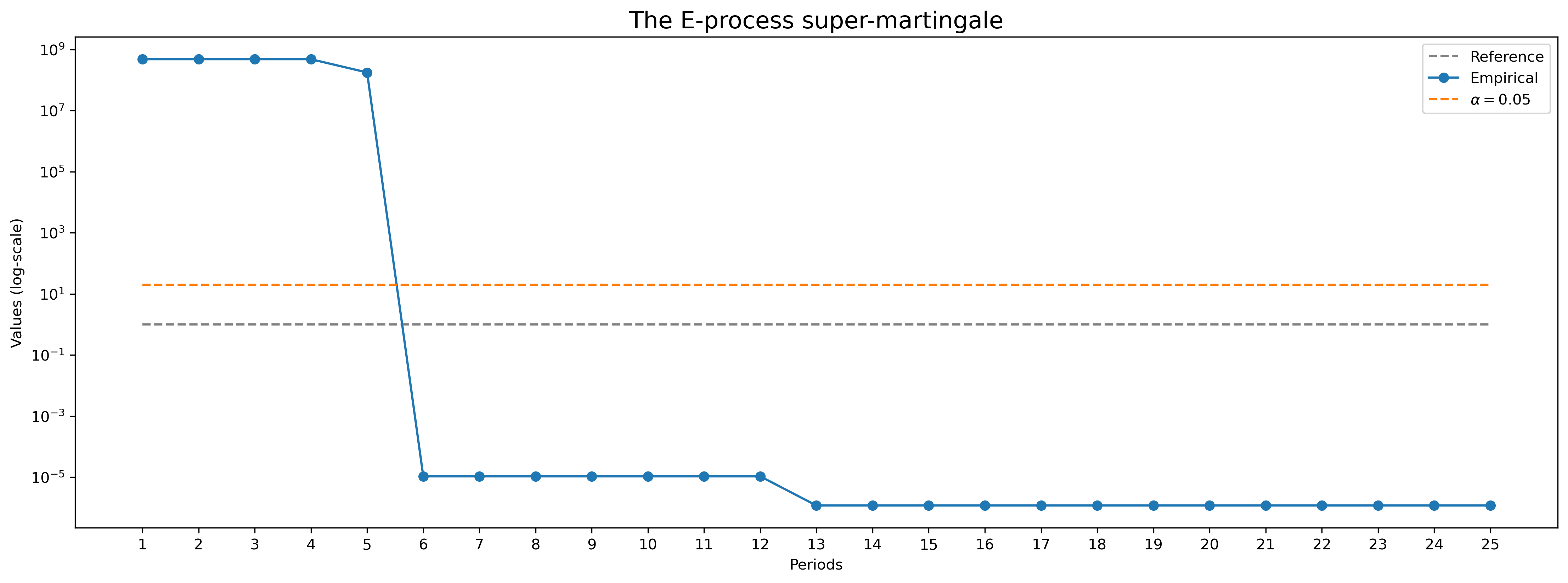}
    \includegraphics[width=0.80\linewidth]{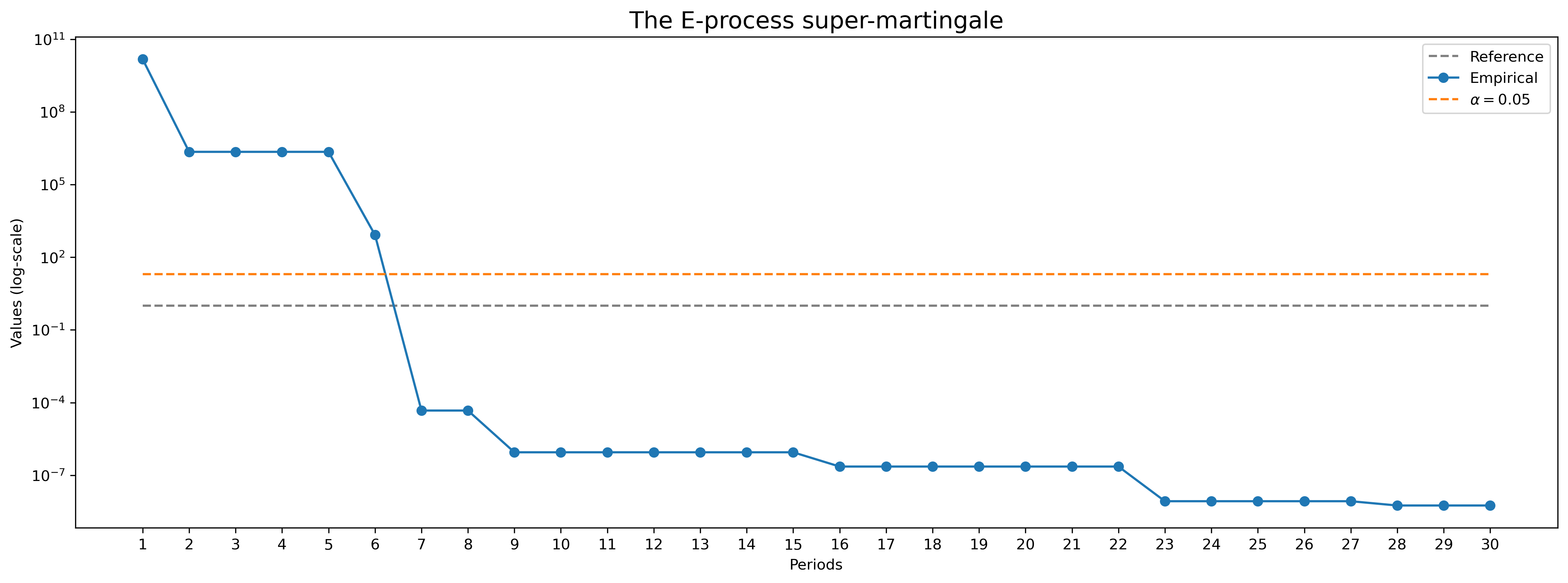}
    \label{fig:results_multi1blocks10}
    \centering
    \includegraphics[width=0.80\linewidth]{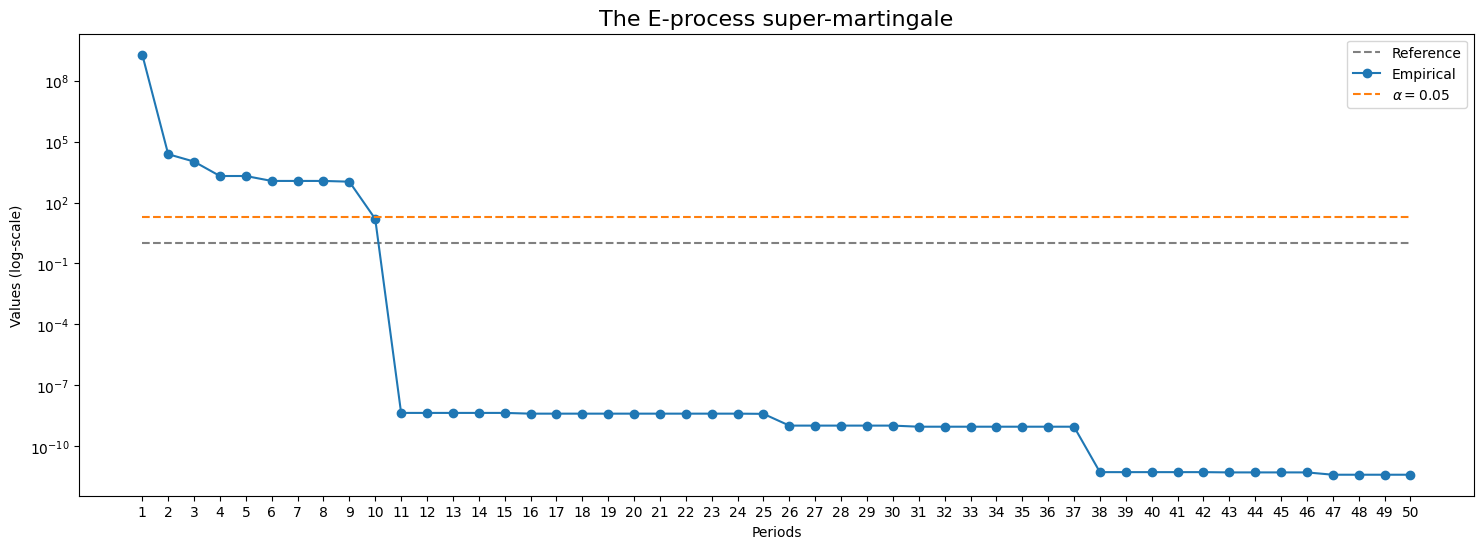}
\caption{
(log-)e-processes corresponding to sequentially inspecting sub-segments of respectively (top to bottom) 50, 30, 25, and 15 minutes.
The threshold of $20 =1/\alpha$ (orange dashed line) is crossed after respectively (top to bottom) 8, 6, 7, and 10 sub-segments.
These correspond to the results in Table~\ref{tab:summaryresults}.
}
\label{fig:results_multi1blocks}
\end{figure}


\subsubsection{Stability analysis ($5^{th}$ and $6^{th}$ night considered as baseline)}
In this case, two baseline segments are available and division in sub-segments is strictly speaking not needed.
However, we repeat the analysis from the previous section now with more baseline data available.
We identify RtB that are in line with the results of the main analysis (cf. Table~\ref{tab:summaryresults_sensitivity}) with a RtB after about 35h.

\begin{table}[!h]
\centering
\begin{tabular}{cccc}
Sub-segments & 
Duration (min) & 
RtB at sub-segment & 
RtB detected after\\
\hline
\hline
3   & 50 & 5 & 35h 50m\\ 
5   & 30 & 6 & 35h\\ 
6   & 25 & 11& 36h 40m\\ 
10  & 15 & 4 & 12h\\ 
\hline
\end{tabular}
\caption{
Summary of results (sensitivity analysis).
Results from repeating the analysis in Table~\ref{tab:summaryresults} with two nights as baseline. 
}
\label{tab:summaryresults_sensitivity}
\end{table}

\subsubsection{Relation to the results of the literature}
Our finding is consistent with the literature, where a return to initial conditions was not found prior to 24 hours~\citep{noben2019fetal}
(although methodological differences being significant: 
ECG-derived signal, 5-minute windows with pre-processing, population-level analysis).
Beyond consistency, the comparison highlights several strengths of the present
approach: finer temporal resolution, subject-specificity
without requiring multi-patient samples, use of multiple quantile levels rather
than location alone, robustness due to use of quantile regression,
and anytime-valid type~I error control that accommodates
any number of tests without multiple testing corrections.




\section{Conclusions}
\label{section:conclusions}

We have introduced a sequential procedure for detecting a return to baseline in
high-frequency monitoring data, where the baseline distribution is unknown, the
intervention effect is unconstrained, and error control must remain valid under
continuous monitoring and optional stopping.
The procedure constructs an empirically calibrated e-process from the baseline
data alone, requiring no parametric null model and remaining inherently
subject-specific.

Our theoretical contribution is Theorem~1, which provides explicit finite-sample
bounds on the calibration error as a function of the amount of baseline data ($n$,
$p$), the smoothness of the conditional quantile functions ($\beta$), the number of
spline knots ($d$), the subsampling size ($s$), and the gap parameter ($\delta$).
These bounds make precise the sense in which the empirically calibrated e-process
converges to a well-calibrated one, and clarify the trade-offs that arise in parameter
selection.
Together with the conservative-but-valid finite-sample guarantee, the procedure
avoids any inflation of the type~I error due to multiple sequential comparisons, since
the e-process framework handles this through Ville's inequality by construction.

The simulations and the clinical case study based on the NIEM-O data confirm that
the procedure behaves as anticipated in realistic settings.
Both the main analysis (single baseline night) and the sensitivity analysis (two
baseline nights) consistently identify a return to baseline of the fetal heart rate
within approximately 35 hours of corticosteroid administration, in agreement with
the pharmacokinetic evidence and with coarser analyses in the existing literature, while remaining patient-specific.
A notable feature of the framework is its multi-scale character: by varying the
sub-segment length, one can probe the signal at different temporal resolutions,
with longer sub-segments imposing a stricter criterion for declaring an RtB.

Quite a few points deserve some comment.
While we assume a quasi-periodic model for the baseline, this is not an intrinsic requirement.
In fact, one can consider completely different types of data and associated models, one can also include covariates, and detection does not necessarily need to be based on comparing conditional quantiles.
The finite-sample bound in Theorem~\ref{theorem:bound} relies on the conditional quantile
functions belonging to the H\"{o}lder class $\mathcal{H}_\beta(L)$, a smoothness condition that cannot be verified in practice; 
however, the calibration step remains valid even if the model is misspecified.
Our numerical experiments suggest the procedure is actually conservative, so that the upper bound can perhaps be improved.
The buffering strategy of Remark~\ref{remark:independence} provides a practical handle on
within-period autocorrelation, but a rigorous treatment of temporal dependence
both within and across periods can also be of interest.

There are also several extensions that may be of practical interest.
Our work focuses on type~I error control but a formal analysis of power is also relevant.
Specifically, one may be interested to understand expected detection delay under the alternative as a function of the strength or other quantities of interest.
Also of interest are regimes where the intervention affects only a narrow range of the distribution, or produces changes in the dependence structure without altering the marginals.
One may also be interested in situations where there is some tolerance in detecting an RtB, or where baseline information from a cohort of patients is available and one would like to borrow strength across individuals while preserving subject-specificity of the inference.
Monitoring several signals simultaneously is straightforward as one can simply produce an e-process out of discrepancies like~\eqref{eq:discrepancies_max} with an extra maximum over the different signals.
A large value for such process indicates a deviation across at least one signal, while a small value for the process indicates no deviation across any signal.

%


\section*{Data Availability}
The clinical data from the NIEM-O study used in Section~\ref{section:case_study} contains personally
identifiable health information and cannot be made publicly available.
All numerical results in Sections~\ref{section:numerics}, ~\ref{section:case_study} and Appendix~\ref{section:numerics:computational_cost} can be reproduced using the code
available at \url{https://github.com/PauloJASerra/RtB_simulations},
which also includes code and seeds used to generate the synthetic datasets used in Section~\ref{section:numerics}.

\section*{Acknowledgments}

All content and code is of the sole authorship of the authors. AI was only used to provide feedback on the manuscript prior to submission, perform spellchecking, and to check for compliance with journal guidelines.
The authors would like to thank Ivar de Vries and Nadine de Klerk for the  helpful discussions and feedback on clinical aspects of the work.

\bibliographystyle{abbrvnat}
\bibliography{reference}

\begin{appendices}

\section{Proof of Theorem~\ref{theorem:bound}}
\label{section:proof}

\noindent\textbf{Decomposition}
Define for 
$x \in \mathbb{R}^p, S\in\mathcal{S}_s$,
\begin{align*}
N^{(\tau)}(x, S) &=
\big|\{j\in S: 
x_j > f^{(\tau)}(t_j)\}\big|,\\
T^{(\tau)}(x,S) &= 
\big|N^{(\tau)}(x, S)-(1-\tau)|S|\big|,\\
F(x,S) &= 
\max_{\tau\in\mathcal{T}}T^{(\tau)}(x,S).
\end{align*}
These parallel the quantities 
$N^{(\tau,-k)}(x, S)$, 
$T^{(\tau,-k)}(x, S)$, 
$F^{(-k)}(x, S)$, from before but with the estimators
$\hat{f}^{(\tau,-k)}$ replaced with the true conditional quantiles
$f^{(\tau)}$.

By simply adding and subtracting some terms, we can rewrite
\[
E[F^{(-k)}(X,Z)\mid\mathcal{F}_0]-
E[F(Z)\mid \mathcal{F}_0] =
H_1-H_2+H_3
\]
where we have, more explicitly,
\begin{align*}
H_1 &= E[(F^{(-k)}-F)(X,Z)\mid\mathcal{F}_0] =
\frac1{{p \choose s}}\sum_{S\in\mathcal{S}_s}
\int (F^{(-k)}(x,S)-F(x,S))\,dP(x)
,\\
H_2 &= \frac1n\sum_{k=1}^nE[(F^{(-k)}-F)(X_k,Z)\mid \mathcal{F}_0] =
\frac1{n{p \choose s}}\sum_{S\in\mathcal{S}_s}\sum_{k=1}^n
\Big(F^{(-k)}(X_k,S)-F(X_k,S)\Big)
,\\
H_3 &= E[\int F(x,Z)\,d(P-P_n)(x)\mid \mathcal{F}_0] =
\frac1{{p \choose s}}\sum_{S\in\mathcal{S}_s} 
\int F(x,S)\,d(P-P_n)(x),
\end{align*}
where $P_n$ represents here the empirical measure of the sample $X_1,\dots,X_n$.

It remains to bound
\[
P\Big(
H_1-H_2+H_3>\delta
\Big) \le
\sum_{i=1}^3
P\Big(|H_i|>\delta/3\Big).
\]

\noindent\textbf{Discrepancy control}
Before we bound these quantities note that by using the triangle inequality and the definition of the functions $F^{(-k)}(x,S)$, we have
\begin{align*}
F^{(-k)}(x,S) &\le
\max_{\tau\in\mathcal{T}}
|T^{(\tau,-k)}(x,S) - T^{(\tau)}(x,S)| +
\max_{\tau\in\mathcal{T}}T^{(\tau)}(x,S)\\
&\le
\max_{\tau\in\mathcal{T}}
|N^{(\tau,-k)}(x,S) - N^{(\tau)}(x,S)| +
F(x,S),
\end{align*}
where we use the reverse triangle inequality in the last step.
The bound also holds if we swap $F^{(-k)}$ and $F$ so we conclude that
\[
|F^{(-k)}(x,S)-F(x,S)| \le
\max_{\tau\in\mathcal{T}}
|N^{(\tau,-k)}(x,S) - N^{(\tau)}(x,S)|.
\]

Consider now,
for $\Delta>0$ (allowed to depend on $n,p$),
the event
\[
G^{(-k)} = 
G_{\Delta}^{(-k)} =
\Big\{
\max_{\tau\in\mathcal{T}}
\sup_{t\in[0,1]}
|\hat f^{(\tau,-k)}(t) - f^{(\tau)}(t)|
\le \Delta
\Big\},
\]
and the random indicator functions on $\mathbb{R}^p$,
\[
c^{(-k)}(x,S) = 
c_\Delta^{(-k)}(x,S) =
\prod_{j\in S} 1
\Big\{
|(f^{(\tau)}(t_j)+\hat f^{(\tau,-k)}(t_j))/2 -x_j|> \Delta/2
\Big\}.
\]
Under the event $G^{(-k)}$, if $c^{(-k)}(X,S)=1$ for some independently sampled $X\sim P$ and some $S\in\mathcal{S}_s$, then this implies that, 
for every $\tau\in\mathcal{T}$, 
each coordinate of $X$ with index in $S$ lies 
either 
above both $f^{(\tau)}$ and $\hat{f}^{(\tau,-k)}$,
or
below both $f^{(\tau)}$ and $\hat{f}^{(\tau,-k)}$.
As a consequence, 
for every $\tau\in\mathcal{T}$, every $S\in\mathcal{S}_s$, and each $k$,
\[
N^{(\tau,-k)}(X,S) = N^{(\tau)}(X,S),\]
which implies
$|F^{(-k)}(x,S)-F(x,S)|=0$.

\noindent\textbf{Bound for the $H_1$ term}
Note that $|F^{(-k)}(x,S)-F(x,S)|$ can be decomposed into
\[
|F^{(-k)}(x,S)-F(x,S)| c^{(-k)}(x,S) + 
|F^{(-k)}(x,S)-F(x,S)| \big(1-c^{(-k)}(x,S)\big),
\]
but, under $G^{(-k)}$, $|F^{(-k)}(x,S)-F(x,S)|=0$ whenever $c^{(-k)}(x,S)=1$
so we conclude that under $G^{(-k)}$,
\[
|F^{(-k)}(x,S)-F(x,S)| \le
s\, 1\{c^{(-k)}(x,S)=0\},
\]
so that under $G^{(-k)}$, 
\[
|H_1| \le
\frac1{{p \choose s}}\sum_{S\in\mathcal{S}_s}
\int s\, 1\{c^{(-k)}(x,S)=0\}\,dP(x) =
s\, P\big(c^{(-k)}(X,Z)=0\mid\mathcal{F}_0\big).
\]
We conclude that for each $k$,
\[
P(|H_1|>\delta/3) \le
P(G^{(-k),c}) + 
P\big(G^{(-k)}, 
s P(c^{(-k)}(X,Z)=0\mid\mathcal{F}_0)>
\delta/3
\big),
\]
where we use $P(A,B)$ to denote $P(A\cap B)$.

We have that 
$P(c^{(-k)}(X,Z)=0\mid\mathcal{F}_0) = 
1 - P(c^{(-k)}(X,Z)=1\mid\mathcal{F}_0)$, 
and that under $G^{(-k)}$
\begin{align*}
P(c^{(-k)}(X,Z)=1\mid \sigma(Z),\mathcal{F}_0) &= 
(1-O_P(\Delta))^s =
\exp(-O_P(s\Delta)) \\ &=
1-O_P(s\Delta),
\end{align*}
as long as $\Delta$ can be chosen so that 
$\Delta=o_{n,p}(1)$.
Under $G^{(-k)}$, this can be made more explicit by imposing an upper bound on the P-probability of an observation falling within a $\Delta$-neighborhood of one of the quantiles under consideration.
In any case, we conclude after a use of Markov's inequality that,
\[
P(|H_1|>\delta/3) \le 
P(G^{(-k),c}) +
O_{n,p}\left(\frac{s^2\Delta}{\delta}\right).
\]
For now we leave the bound as is since $P(G^c)$ will feature in other terms.

\noindent\textbf{Bound for the $H_2$ term}
This term is similar to the previous one but with $P$ replaced with $P_n$ so it has to be handled slightly differently.
We apply Markov's inequality followed by the tower property to get that
\begin{align*}
P(|H_2|>\delta/3) &\le
\frac{3}\delta
\frac1{n{p \choose s}}\sum_{S\in\mathcal{S}_s}
\sum_{k=1}^n E|F^{(-k)}(X_k,S)-F(X_k,S)|\\ & =
\frac{3}\delta
\frac1{{p \choose s}}\sum_{S\in\mathcal{S}_s}
E\int|F^{(-k)}(x,S)-F(x,S)|\,dP(x),
\end{align*}
for any $k=1,\dots,n$, and $X\sim P$ independent of $\mathcal{F}_0$, since the differences $|F^{(-k)}(X_k,S)-F(X_k,S)|$ are identically distributed across $k$ for each fixed $S$.

We introduce the indicator of $G^{(-k)}$ and proceed as with $H_1$ to bound the integral with the help of the indicator function $c^{(-k)}$,
\[
\frac{3s}\delta P(G^{(-k),c}) +
\frac{3s}\delta
E1_{G^{(-k)}}\,P\big(
c^{(-k)}(X,Z)=0\mid\mathcal{F}_0\big).
\]
We can now use the law of total expectation to condition on $Z$ and use to bound from before to conclude that
\[
P(|H_2|>\delta/3) \le 
\frac{3s}\delta P(G^{(-k),c}) +
O_{n,p}\left(\frac{s^2\Delta}{\delta}\right).
\]

\noindent\textbf{Bound for the $H_3$ term}
We can write
\[
H_3 =
\int g(x)\,d(P-P_n)(x),
\quad\text{for}\quad
g(x) =
\frac1{{p \choose s}}\sum_{S\in\mathcal{S}_s}F(x,S),
\]
where $g$ is a bounded function.
Since $|g(x)|\le s$, and the $X_1,\dots,X_n$ for a random sample, we can use Hoeffding's inequality to get
\[
P\big(|H_3|>\delta/3\big) \le 
2 \exp\left(-\frac{2n\delta^2}{9s^2}\right).
\]

\noindent\textbf{Finalizing the bound}
Putting all together, we have
\[
E\left[
\frac{F^{(-k)}(X,Z)}{D_{\hat\gamma}}\right]
\le
1 + 
s\left(1+\frac{3s}{\delta}\right)
P(G^{(-k),c}) +
O_{n,p}\left(\frac{s^3\Delta}{\delta}\right) +
2s 
\exp\left(-\frac{2n\delta^2}{9s^2}\right).
\]
It remains to bound, for any $k$, the probability
\begin{align*}
P(G^{(-k),c}) &=
P\big(\max_{\tau\in\mathcal{T}}
\sup_{t\in[0,1]}
|\hat f^{(\tau,-k)}(t) - f^{(\tau)}(t)|
> \Delta\big) \\ &\le
\sum_{\tau\in\mathcal{T}}
P\big(
\sup_{t\in[0,1]}
|\hat f^{(\tau,-k)}(t) - f^{(\tau)}(t)|
> \Delta\big)
\end{align*}

Next note that for any 
$d,m,n\in\mathbb{N}$,
$k=1,\dots,n$,
$\rho>0$,
$\tau\in(0,1)$, and
$t\in[0,1]$ (w.l.g.),
we have
\[
\hat f^{(\tau,-k)}(t) =
f_{\hat\theta^{(\tau,-k)}}(t) =
\sum_{i=1}^{d+m}
\hat\theta_i^{(\tau,-k)} B_i^m(\rho\, t).
\]
If $f^{(\tau)}\in\mathcal{H}_\beta(L)$ is the true underlying $\tau$-quantile function, then there exists $\theta^{(\tau)}$ so that
\[
\sup_{t\in[0,1]}
|f_{\theta^{(\tau)}}(t) - f^{(\tau)}(t)|^2 \le
\phi_{m,\beta, L}(d),
\]
where, as long as $m\ge\beta$, we can take 
\[
\phi_{m,\beta,L}(d) =
C_\beta\, 
L^2\, 
d^{-2\beta},
\]
for some universal constant $C_\beta$ depending only on $\beta$; cf.~Corollary 6.26 of~\cite{schumaker2007spline}.
We conclude that
\[
|\hat f^{(\tau,-k)}(t) - f^{(\tau)}(t)|^2 \le
2|f_{\hat\theta^{(\tau,-k)}}(t) - 
f_{\theta^{(\tau)}}(t)|^2 +
2\phi_{m,\beta, L}(d),
\]
since $(a+b)^2\le2(a^2+b^2),a,b\in\mathbb{R}$, and then, since the B-spline basis function are non-negative and partition unity, the difference in the upper bound is
\[
\Big|
\sum_{i=1}^{d+m}
(\hat\theta_i^{(\tau,-k)}-\theta_i^{(\tau)}) B_i^m(\rho\, t)
\Big|^2 \le
\sum_{i=1}^{d+m}
(\hat\theta_i^{(\tau,-k)}-\theta_i^{(\tau)})^2,
\]
by Jensen's inequality, so that the upper bound no longer depends on $t$.

We conclude that as long as 
$\Delta^2 > 2\phi_{m,\beta, L}(d)$,
then
\begin{align*}
P(G^{(-k),c}) \le
\sum_{\tau\in\mathcal{T}}
P\left(
\sum_{i=1}^{d+m}
(\hat\theta_i^{(\tau,-k)}-\theta_i^{(\tau)})^2
> 
\frac12\Delta^2 -
\phi_{m,\beta, L}(d)
\right),
\end{align*}
is under control.
In fact, this upper bound vanishes for appropriately vanishing $\Delta$ since by the CLT for regression quantiles, their squared error is
\[
O_P\left(\frac{d+m}{(n-1)p}\right),
\]
which holds under the assumption made in the theorem.
Markov's inequality then gives us
\[
P(G^{(-k),c}) \le 
O_{n,p}\left(
\frac{|\mathcal{T}|(d+m)}{(n-1)p(\Delta^2-2\phi_{m,\beta,L}(d))}
\right).
\]

\noindent\textbf{Minimizing the bound}
The second and third terms in the upper bound depend on $\Delta$ which can be taken freely as long as $\Delta^2 > 2\phi_{m,\beta, L}(d)$.
To balance these two terms we then take for instance $\Delta^2 = 3\phi_{m,\beta, L}(d)$ if we pick $d$ large enough so that $\phi_{m,\beta, L}(d)$ is greater than the risk delivered by the CLT.
With some extra trivial bounding, this leads to the bound in the statement of the theorem.

\section{Discussion on computational cost}\label{section:numerics:computational_cost}

The calibration procedure outlined in Section~\ref{section:framework:E_process:norming} relies on computing the distribution of $F(Z)\mid\mathcal{F}_0$.
This only has to be done once, after all baseline data has been collected.

Computing this distribution involves recomputing $F$ from the data for each subset $S$ of $\{1,\dots,p\}$ of cardinality $|Z|=s$;
evaluating $F$ itself implies carrying out a hold-out procedure over each of the $n$ observations (to compute each of the $F^{(-k)}$ that are averaged to get $F$).
The computational cost of computing this distribution is then $n{p\choose s}$ times the computational cost of evaluating each $F^{(-k)}(S)$ for a given $k,S$.

Evaluating each $F^{(-k)}(S)$, in turn, involves fitting $|\mathcal{T}|$ splines of order $m$ with $d+m$ parameters to the respective conditional quantile curves, each based on $(n-1)p$ data points.
In our implementation we used the \emph{statsmodels} library in Python; cf.~\cite{statsmodels}.
While we could not find any explicit information about the computational cost of this particular implementation, we expect it to scale linearly with the number of quantiles, quadratically with the number of observations, and linearly with the number of parameters, leading to a computational cost of order $n^2p^2(d+m)|\mathcal{T}|$.
Finally, after the model is fitted, we evaluate each observation which should incur in a cost of $d+m$.

We conclude that the calibration procedure has a computational cost of order no more than
\[
n^3{p\choose s}p^2(d+m)^2|\mathcal{T}|.
\]
We ran a few simulations to get a better feeling of how the computational cost actually scales with each parameter in practice.

We took as a baseline a dataset
\footnote{The specific data distribution is not relevant when assessing the computation time but we took the same signal as in Section~\ref{section:numerics}.}
with 
$n=4$ periods under the baseline,
$p=2^5=32$ observations per period,
$m=3$ for the order of the B-splines,
$\mathcal{T}=\{0.1, 0.5, 0.9\}$ three equally-spaced quantile levels between $0.1$ and $0.9$, and finally,
$df = 12 = d+m$ B-spline coefficients which mean working with $d=12-m$ knots in total.

We ran the calibration procedure 10 times and report the average execution time which on a laptop was
$0.36$ 
seconds. 
We then took each of the parameters above one by one and reran the calibration for multiples of the initial value keeping all other parameters fixed.
In Figure~\ref{fig:computation_time_n} we show the effect of increasing $n$.

\begin{figure}[!h]
\centering
\includegraphics[width=0.75\linewidth]{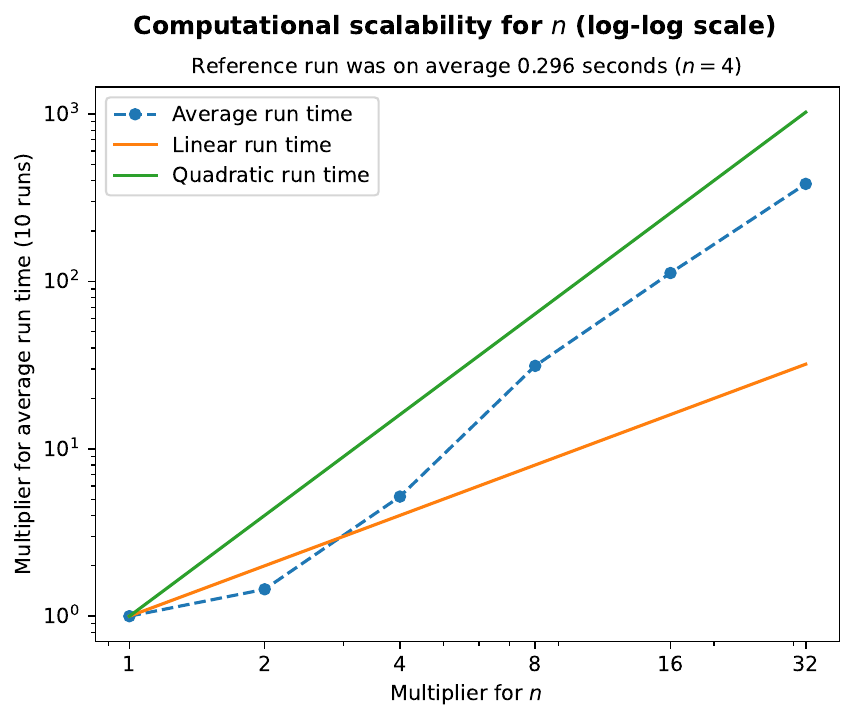}
\caption{Scaling in $n$ of the computational time.}
\label{fig:computation_time_n}
\end{figure}

We plot on a log-log scale the multiplier for $n$ and the multiplier for the corresponding execution time.
So for instance the point at roughly $(16,100)$ means that running the code with $n=4\times 16$ increases the execution time (roughly) 100-fold from the reference $0.36$ seconds.
We also report for reference what linear- and quadratic growth would look like.
Comparing slopes, the figure suggests that the computational cost seems to actually align better with quadratic (rather than the expected cubic) growth.
This may be related to how the \emph{numpy} library handles vectorized computations.
The plots in Figure~\ref{fig:computation_time_others} present analogous results for the remaining parameters.

\begin{figure}[!h]
\centering
\includegraphics[width=0.49\linewidth]{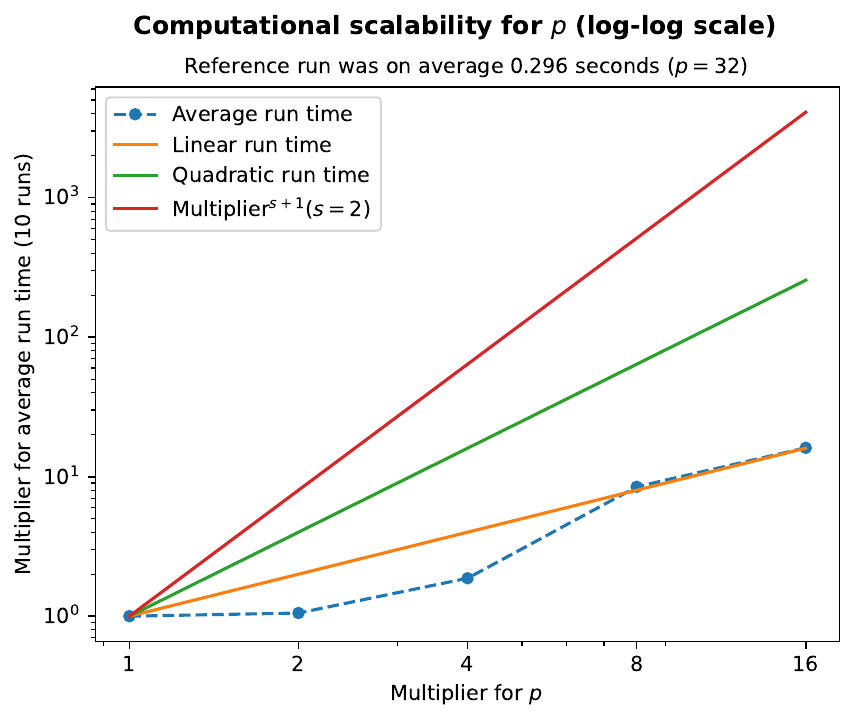}
\includegraphics[width=0.49\linewidth]{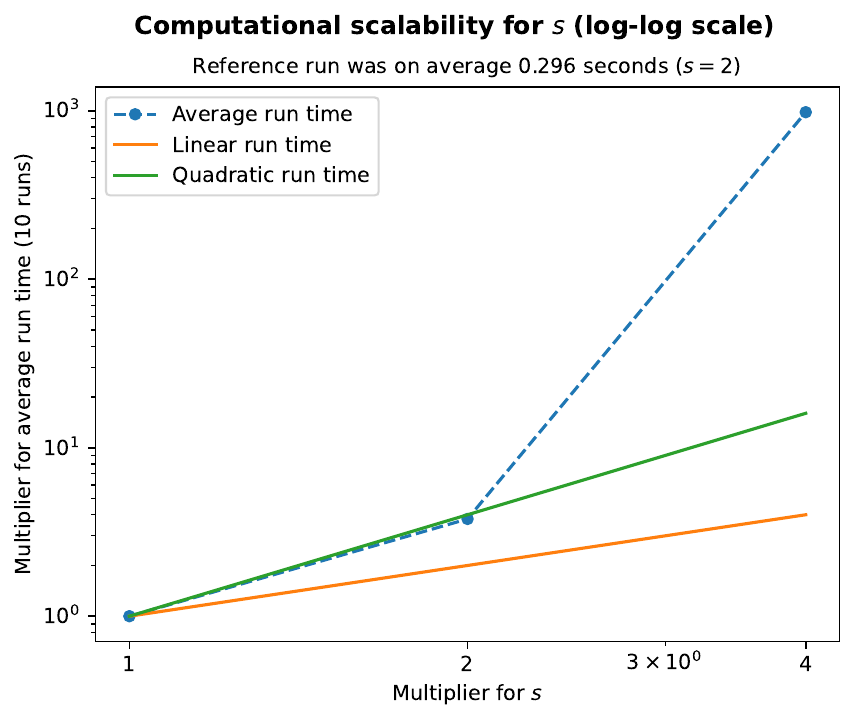}\\
\includegraphics[width=0.49\linewidth]{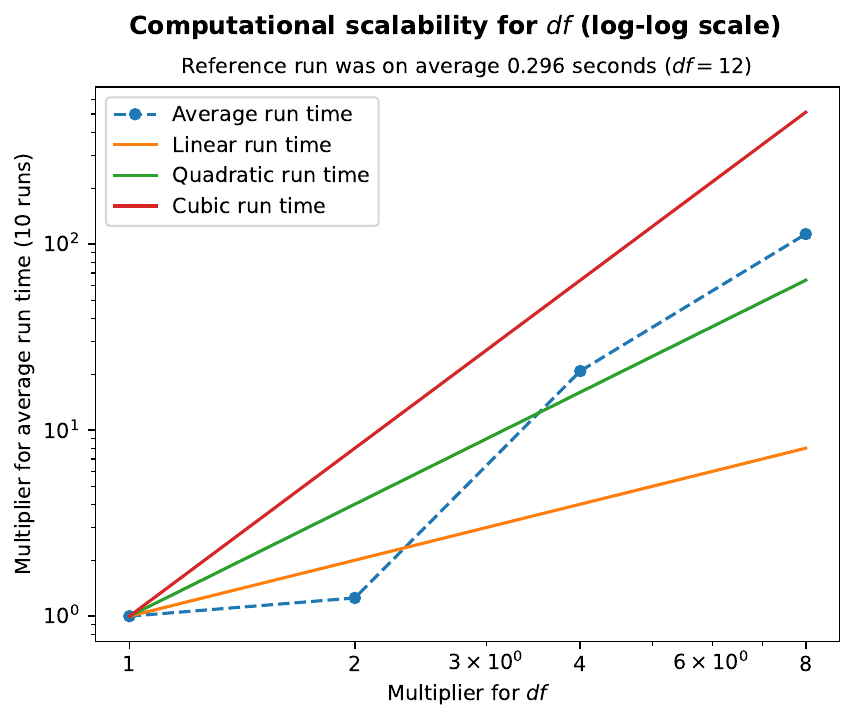}
\includegraphics[width=0.49\linewidth]{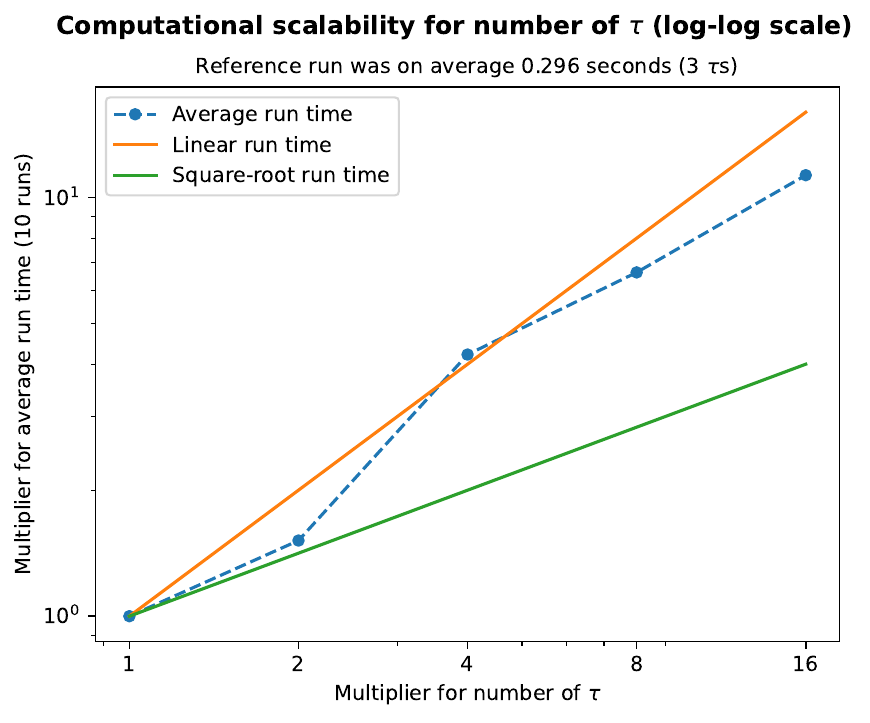}
\caption{Scaling in $p$, $s$, $df$, and $|\mathcal{T}|$ (left to right and top to bottom) of the computational time.}
\label{fig:computation_time_others}
\end{figure}

The scaling for $p$ is as expected aligned with quadratic- or cubic ($s+1=3$) growth.
The scaling in $s$ is by far the least favorable.
Indeed, increasing $s$ has a large impact on the number of sets $S$ that have to be considered.
If the computational cost becomes too steep, one may consider capping the maximal number of subsets that is considered and thus produce a Monte-Carlo approximation of the norming constant.
We implemented this possibility in our code but that cap was not reached in our simulation.
The scaling for the number of coefficients is somewhat steeper than the expected (quadratic) growth.
This may be due to costs associated with storing higher dimensional representations for the fitted line.
Finally, scaling up the number of $\tau$ incurs in a lower than expected cost, likely due to how the several quantiles are fitted simultaneously.

Two final remarks about computation time.
Given the high computational cost of having large $p$, particularly if $s$ is large, one may reduce $p$ at the expense of increasing $n$ via thinning;
for instance, each period is split into two, one containing the odd indexed observations and one the even indexed ones.
In effect we then have $2n$ periods, each with $p/2$ observations.
Increasing $p$ at the expense of reducing $n$ is also possible by merging periods; 
for instance, merging the data from each odd indexed period with the following period leads to $n/2$ periods, each with $2p$ observations.
This may be advantageous if $s$ is relatively small.

A second remark is that the same way that one may impose a cap on the number of subsets $S$ that we consider, one can likewise not hold out every single one of the $n$ baseline periods but instead only a subset of them.
In our numerical experiments we in fact saw that as $np$ increases, the estimated $\hat{f}^{(-k)}$ are quite similar leading to nearly identical discrepancies $F^{(-k)}$ so that averaging over the held out periods, while useful to establish Theorem~\ref{theorem:bound}, may not be so essential in practice.

\subsection{Calibration accuracy}
\label{section:numerics:calibration_acuracy}

The goal of the calibration procedure is to ensure that the random variable distributed like $F(X,Z)$ is scaled so as to have expectation bounded by 1.
In this numerical experiment, we perform a Monte Carlo procedure to numerically approximate the expectation of $F(X,Z)$ scaled by a range o quantiles of $F(Z)$.

Specifically, we took as baseline the baseline signal from Section~\ref{section:numerics} and 
a) sampled data from the baseline (with $n=4$, $p=32$);
b) sampled $Z=z$ (with $s=2$); 
c) computed the distribution of $F(Z)$ from the data;
d) used different statistics of this distribution to scale $F(X,z)$ for independent $X$ from the baseline.
The above was repeated $10^4$ times to obtain i.i.d.\ samples of $F(X,z)$ for different norming constants.
Figure~\ref{fig:normalisation_MC} displays the average and (approximate) 95\% error bounds for the scaled $F(X,z)$ and the results for the same experiment with a larger value of $n$.
\begin{figure}[!h]
\centering
\includegraphics[width=1\linewidth]{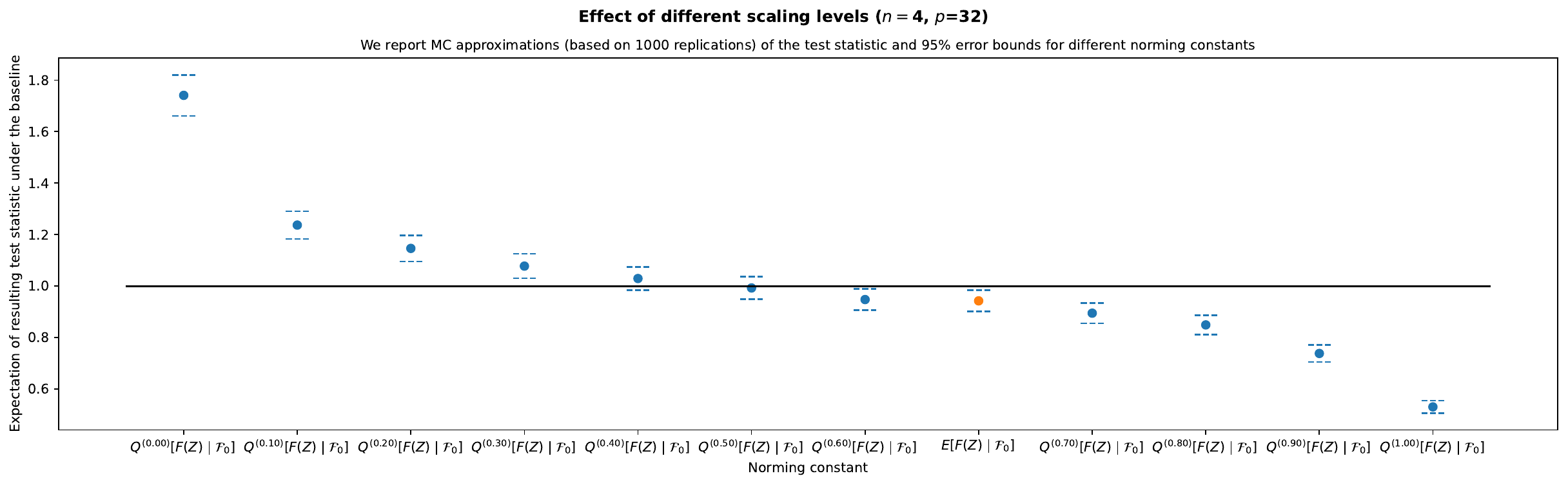}
\includegraphics[width=1\linewidth]{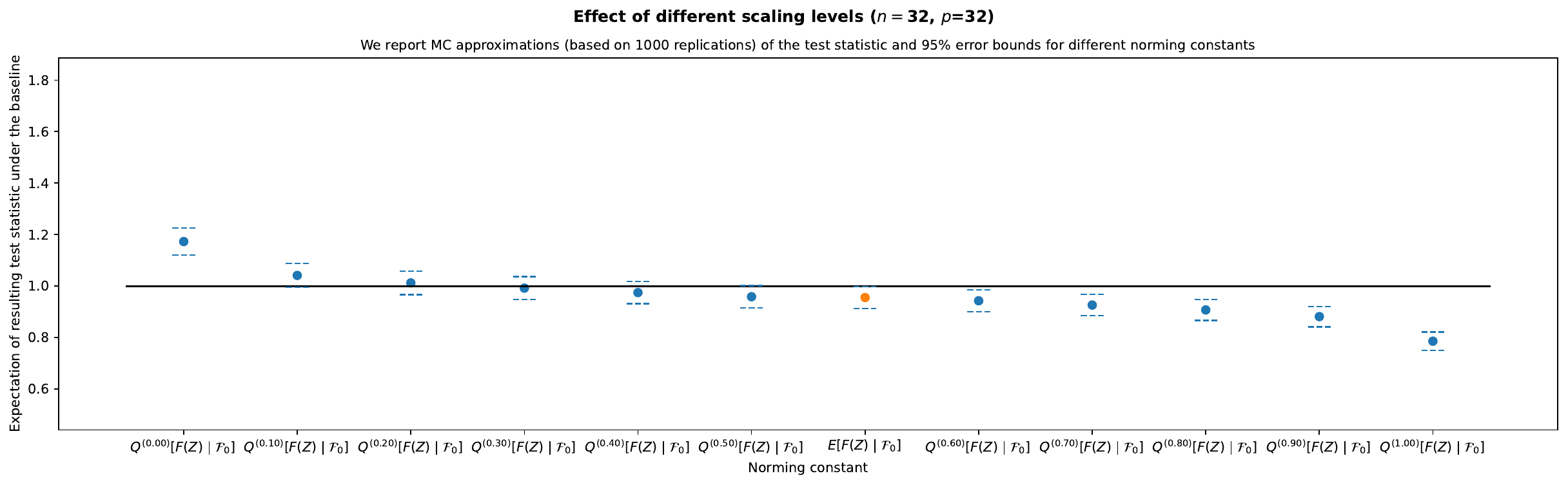}
\caption{Effect of different scaling factors on the discrepancy measure. Rather than reporting different values of $\delta$ (since the results would be dependent on the choice of $s$) we report quantiles of $F(Z)$ and, in orange, the expectation (corresponds to roughly $\delta=0$.) Above we have $n=4$ while below $n=32$.}
\label{fig:normalisation_MC}
\end{figure}

Theorem~\ref{theorem:bound} prescribes that 
(under certain assumptions), scaling by any quantile that sits above the expectation, 
the expectation is asymptotically bounded by 1.
On the horizontal axis of Figure~\ref{fig:normalisation_MC} we report the different statistics of the distribution of $F(Z)$ used to scale $F(X,z)$.

The figure illustrates that while the result of the theorem is rather conservative, the expectations of the test statistic under the baseline is close to 1, even if statistics below the conditional expectation of $F(Z)$ are used.
This is an indication that the entire conditional distribution of $F(Z)$ given the data concentrates (as $n$ grows) around the expectation of the discrepancy $F(X,Z)$.

The fact that the scaling prescribed by Theorem~\ref{theorem:bound} might be somewhat conservative is not a concern.
If an e-process is calibrated with respect to a give null model $\mathcal{P}_0$, presumably there are measures in $\mathcal{P}_0$ for which the expectation of the process is much less than 1.

\end{appendices}

\end{document}